\documentclass[final,11pt,times,authoryear]{elsarticle}

\usepackage{amssymb,amsmath,amsfonts,eurosym,geometry,ulem,graphicx,caption,color,setspace,sectsty,comment,footmisc,caption,natbib,pdflscape,subfigure,array,hyperref}
\hypersetup{hidelinks,
    colorlinks = true,
    allcolors=black,
    pdfstartview=Fit,
    breaklinks= true}
\usepackage{multirow}
\usepackage{longtable}
\usepackage{threeparttable}
\usepackage{rotating}

\newtheorem{remark}{Remark}

\newtheorem{definition}{Definition}
\newtheorem{theorem}{Theorem}

\newtheorem{proposition}{Proposition}

\newcolumntype{L}[1]{>{\raggedright\let\newline\\arraybackslash\hspace{0pt}}m{#1}}
\newcolumntype{C}[1]{>{\centering\let\newline\\arraybackslash\hspace{0pt}}m{#1}}
\newcolumntype{R}[1]{>{\raggedleft\let\newline\\arraybackslash\hspace{0pt}}m{#1}}

\begin{document}

\onehalfspacing

\begin{frontmatter}
\title{Robust Investment-Driven Insurance Pricing under Correlation Ambiguity}
		
\author[1,2]{Shunzhi Pang\corref{cor1}}
\ead{psz22@mails.tsinghua.edu.cn}
		
\cortext[cor1]{Corresponding author}

\affiliation[1]{organization={School of Economics and Management, Tsinghua University},
city={Beijing},
postcode={100084}, 
country={China}} 

\affiliation[2]{organization={Faculty of Economics and Business, KU Leuven},
city={Leuven},
postcode={3000}, 
country={Belgium}}

\begin{abstract}
As insurers increasingly behave like financial intermediaries and actively participate in capital markets, understanding the dependence structure between insurance and financial risks becomes crucial for insurers’ operations. This paper studies dynamic equilibrium insurance pricing when insurers face ambiguity about the correlation between insurance and financial risks and optimally choose underwriting and investment strategies under worst-case beliefs. Correlation ambiguity can generate multiple equilibrium regimes. Contrary to conventional intuition, we find ambiguity does not necessarily increase insurance prices nor reduce insurers' utility. 
\end{abstract}

\begin{keyword}
Risk management\sep Insurance pricing\sep Correlation ambiguity\sep Robust control\sep Competitive equilibrium 
\end{keyword}

\end{frontmatter}

\section{Introduction}

As insurers increasingly behave like financial intermediaries, they raise large amounts of low-cost funds through underwriting activities and actively participate in capital markets \citep{koijen2021evolution, koijen2023financial}. As a result, the dependence structure between underwriting and financial investment risks plays a central role in insurers’ operating decisions. It not only affects insurance pricing, for instance the equilibrium loading premium may become negative when insurance gains are negatively correlated with financial returns, but also has important implications for regulatory policies governing insurers’ asset allocation \citep{chen2025dynamic}. It has been shown that either zero or maximal regulatory intensity can be optimal, depending on the sign and magnitude of the correlation. Therefore, a proper statistical understanding and estimation of the dependence structure is crucial. 

In practice, accurately estimating the dependence structure is challenging. On the one hand, the economic mechanisms underlying the correlation between insurance and financial markets are complex. A variety of factors such as economic growth, climate conditions, and institutional environments, may simultaneously affect both markets. On the other hand, statistical estimation is difficult due to limited data availability and the possibility that the dependence structure itself is time-varying. As a result, these considerations naturally motivate incorporating robustness concerns into insurers’ statistical estimation and decision-making. 

In this paper, we study how insurers’ ambiguity aversion regarding the correlation between insurance and financial risks affects equilibrium outcomes in the insurance market. Ambiguity, also referred to as model uncertainty or ``unknown unknowns'' \citep{knight1921risk, ellsberg1961risk}, arises when the underlying probability distributions of stochastic events are not fully known. It is a salient feature of the insurance industry, particularly in the presence of catastrophic or emerging risks, and has been extensively studied in the literature \citep{hogarth1989risk, kunreuther1995ambiguity, dietz2019ambiguity, pang2025robust}. Most existing studies introduce ambiguity through uncertainty about the expected level of underwriting losses or financial returns. In contrast, much less attention has been paid to ambiguity in higher-order risk characteristics, such as risk volatility or the correlation between insurance and financial risks. While it is relatively intuitive that ambiguity about expected losses could lead to higher insurance prices, the implications of ambiguity about risk correlation for insurance pricing are far less obvious. 

Recent work by \citet{fouque2016portfolio} develops a tractable approach to dealing with correlation ambiguity based on the G-expectation framework \citep{peng2007g, peng2008multi, peng2019nonlinear}, and applies it to portfolio selection problems. Building on this, \citet{cheng2024robust} study insurers’ optimal investment under ambiguous correlation between non-tradable surplus and stock return processes. These works provide a methodological foundation for our study. We similarly assume insurers take the correlation between insurance and financial market risks estimated from historical data as a reference, but allow for subjective distortions around this benchmark to reflect ambiguity. Then, insurers make robust underwriting and investment decisions to maximize the expected utility of terminal wealth under the worst-case correlation distortion. 

To study how ambiguity concerns affect insurance pricing, we adopt an equilibrium framework in which prices are endogenously determined by market clearing. This approach originates from the economic premium principle of \citet{buhlmann1980economic, buhlmann1984general}, which emphasizes that insurance prices should jointly reflect risk and market conditions, and has been extended to dynamic settings by \citet{henriet2016dynamics}, \citet{feng2023insurance}, and \citet{chen2025dynamic}. Given the price process, a representative insurer with constant absolute risk aversion (CARA) utility chooses optimal underwriting, which constitutes market supply. And a representative risk-averse policyholder with mean-variance preferences over an infinitesimal time interval chooses optimal insurance coverage, constituting market demand. 

Depending on the interaction between the reference correlation, the degree of ambiguity aversion (measured by the ambiguity radius), and the relative profitability of underwriting versus financial investment, the market equilibrium can exhibit multiple regimes with markedly different insurer behavior. One possibility is a zero underwriting equilibrium, in which the insurance price reaches its upper bound and the insurer allocates all funds to financial investment. Another possibility is a pure underwriting equilibrium, in which the insurer holds no financial investment position. In both cases, the optimally chosen worst-case correlation distortion lies in the interior of the ambiguity set. In addition, the market may feature upper- or lower-bound correlation-distorted equilibria, in which the optimally chosen distortion attains its maximal or minimal value, respectively. In these regimes, underwriting remains strictly positive, while the investment position is strictly positive or strictly negative (indicating a short sale). Finally, there exist parameter configurations under which no admissible insurance price clears the market due to exogenously imposed price bounds, resulting in market failure.

Numerical simulations of the equilibrium outcomes reveal several novel patterns. First, the equilibrium may transition between different regimes over time. For instance, with zero reference correlation and a moderate degree of ambiguity aversion, the equilibrium initially features a pure underwriting regime and subsequently shifts to an upper-bound correlation-distorted regime. With a positive reference correlation and moderate ambiguity aversion, the equilibrium may instead transition from a lower-bound correlation-distorted regime to a pure underwriting regime. These regimes can be clearly distinguished by insurers’ underwriting and investment behavior.

Second, while stronger ambiguity aversion is often associated with higher equilibrium insurance prices and lower underwriting activity, this relationship is not necessarily monotonic. With zero reference correlation and a very high level of ambiguity aversion (corresponding to a 99\% confidence level), the equilibrium price path under the pure underwriting regime coincides with the no-ambiguity benchmark that arises from a degenerate upper-bound correlation-distorted regime. This illustrates that ambiguity does not necessarily raise insurance prices. Our comparative statics analysis characterizes the conditions under which monotonicity holds. Within the upper-bound correlation-distorted regime, the effects of ambiguity aversion on prices, underwriting, and investment depend on parameter values, whereas within the lower-bound correlation-distorted regime, higher ambiguity aversion always increases insurance prices and leads insurers to reduce underwriting and short positions in the risky asset, thereby lowering overall risk exposure.

Finally, ambiguity aversion regarding correlation does not necessarily reduce insurers’ utility. In simulations with a positive reference correlation, where the equilibrium predominantly lies in the lower-bound correlation-distorted regime, a higher degree of ambiguity aversion can increase insurers’ instantaneous utility gains. This occurs because ambiguity induces insurers to adopt more robust and conservative risk allocations when correlation-based risk diversification becomes fragile. 

We primarily contribute to the actuarial literature on the role of ambiguity in insurers’ robust pricing and risk management. The impact of ambiguity on insurance pricing has been discussed since the 1980s, and surveys of insurers and actuaries consistently document more conservative pricing and underwriting behavior aimed at avoiding unexpected large losses \citep{hogarth1989risk, hogarth1992pricing, kunreuther1993insurer, kunreuther1995ambiguity, cabantous2007ambiguity, cabantous2011imprecise}. However, how much insurance prices should increase in response to ambiguity is ultimately a quantitative question. Recent studies have begun to address this issue by developing structured equilibrium models that deliver explicit pricing implications \citep{zhao2011ambiguity, pichler2014insurance, dietz2019ambiguity, dietz2021pricing, pang2025robust}.

In parallel, a large literature examines how ambiguity affects insurers’ investment behavior and typically finds that insurers tend to under-invest in risky assets \citep{yi2013robust, zeng2016robust, guan2019robust, gu2017optimal, li2019optimal}. Owing to methodological challenges, however, relatively little attention has been paid to ambiguity regarding the correlation structure of risks. The study most closely related to ours is \citet{cheng2024robust}, which adopts a similar framework to analyze insurers’ investment decisions under correlation ambiguity. While we build on their methodological approach, our analysis emphasizes the implications of correlation ambiguity for insurance pricing, which we argue constitutes a first-order concern for actuaries and is not addressed in their study. Moreover, our welfare results differ: we show that correlation ambiguity does not necessarily lead to utility losses for insurers.

The remainder of this paper is organized as follows. Section \ref{Section Model} introduces the model setup and defines the market equilibrium. Section \ref{Section Equilibrium} solves the insurer’s robust control problem and characterizes the equilibrium, and further analyzes its properties. Section \ref{Section Numerical} presents numerical results that illustrate the effects of correlation ambiguity. Finally, Section \ref{Section Conclusion} concludes the paper.

\section{General Model and Problem Formulation} \label{Section Model}

In this section, we extend the theoretical framework of \citet{chen2025dynamic} by incorporating correlation ambiguity as characterized in \citet{cheng2024robust}. Specifically, we allow insurance risk and financial market risk to be potentially correlated, so that insurers’ underwriting and investment decisions are jointly determined. Because the true correlation parameter cannot be precisely estimated in practice, insurers only know that it lies within a plausible interval, which introduces correlation ambiguity and complicates the decision problem.

Formally, let $\Omega$ be the canonical path space equipped with the Borel $\sigma$-field $\mathcal{F}$, and let $\{\mathcal{F}_t\}_{t\ge 0}$ be the natural filtration generated by the canonical processes, augmented to satisfy the usual conditions. We take $\mathbb{P}$ as a reference probability measure. All stochastic processes governing the insurance market, the financial market, and their interactions are defined on this filtered space.

\subsection{Insurance Market} 

Consider a competitive insurance market with a representative insurer and a representative policyholder. In the physical world, the risk‐averse policyholder faces an objectively existing cumulative loss process $L \triangleq \Big\{ L_t: t \geq 0 \Big\}$, whose dynamics are given by: 
\begin{equation}
    \mathrm{d}L_t = l \mathrm{d}t - \eta \mathrm{d}W^I_t, \label{Loss process}
\end{equation}
where $l > 0$ denotes the expected instantaneous loss, $\eta > 0$ measures the volatility of losses, and $W^I \triangleq \Big\{ W^I_t: t \geq 0 \Big\}$ is a Brownian motion. The policyholder may choose to retain the risk or transfer part (or all) of it to the insurer through an insurance contract. We assume that contracts are short-term, making the framework particularly suitable for analyzing pricing and capacity dynamics in P\&C insurance markets. 

The premium charged per unit of time follows the expected value principle: 
\begin{equation}
    P_t = \left(1 + \theta_t\right) \mathbb{E} [L_t] =  \left(1 + \theta_t\right) l, \notag
\end{equation}
where the premium consists of two components: the actuarially fair premium $l$, and a loading premium governed by the loading factor $\theta_t$. For simplicity, we refer to $\theta_t$ as the price of insurance. In contrast to traditional actuarial models in which $\theta_t$ is typically imposed exogenously as a non-negative constant, here it is endogenously determined through the market-clearing condition. This formulation allows the insurance price to reflect prevailing economic conditions, and more importantly enables us to quantify how correlation ambiguity propagate into equilibrium pricing. 

Given the price, the representative insurer chooses its underwriting amount $x_t \geq 0$, forming the process $X \triangleq \Big\{ x_t: t \geq 0 \Big\}$. Then, the insurer's surplus process $M \triangleq \Big\{ m_t: t \geq 0 \Big\}$ evolves as: 
\begin{equation}
    \mathrm{d}m_t = x_t P_t \mathrm{d}t - x_t \mathrm{d}L_t = x_t \theta_t l \mathrm{d}t + x_t \eta \mathrm{d} W^I_t,  \notag
\end{equation}
where the insurer earns premium income $x_t P_t \mathrm{d}t$, and is exposed to stochastic losses $x_t \mathrm{d}L_t$. 

On the demand side, the representative policyholder transfers a fraction $\beta_t$ of the loss exposure to the insurer while retaining the remaining $1 - \beta_t$. Following \citet{luciano2022fluctuations}, and assuming mean-variance preferences over an infinitesimal time interval $\mathrm{d}t$, the policyholder’s optimal insurance coverage is $\beta_t^{\ast} = 1 - \frac{l}{\alpha \eta^2} \theta_t$, where $\alpha > 0$ denotes the risk aversion level. Thus, the market demand function for insurance can be expressed as: 
\begin{equation}
    d(\theta) \triangleq 1 - \frac{l}{\alpha \eta^2} \theta, \quad \underline{\theta} = 0 \leq \theta \leq \overline{\theta} = \frac{\alpha \eta^2}{l}.  \label{demand function}
\end{equation}
Here, the bounds $\overline{\theta}$ and $\underline{\theta}$ be viewed as exogenous regulatory constraints keeping the insurance price within a feasible and economically meaningful range. 

Denote the price process by $\Theta \triangleq \Big\{ \theta_t: t \geq 0 \Big\}$. At any time $t$, the market clearing condition requires the aggregate underwriting supply equals the aggregate demand:
\begin{equation}
    x_t(\theta_t) = d(\theta_t). \notag
\end{equation}
Then, the equilibrium price $\theta^{\ast}_t$ can be determined endogenously. 

\subsection{Financial Market}

In addition to underwriting activities, insurers can invest in a financial market consisting of a risk-free asset and a risky asset (the market portfolio). Following the standard Black-Scholes framework, the price dynamics of the two assets are given by: 
\begin{equation}
    \left\{
    \begin{aligned}
        & \frac{\mathrm{d}B_t}{B_t} = r \mathrm{d}t, \\
        & \frac{\mathrm{d}S_t}{S_t} = \mu \mathrm{d}t + \sigma \mathrm{d}W^S_t, 
    \end{aligned}
    \right. \label{Price process}
\end{equation}
where $r$ is the risk-free rate, $\mu > r$ is the expected return of the risky asset, and $\sigma > 0$ captures the volatility. Here, $W^S \triangleq \Big\{ W^S_t: t \geq 0 \Big\}$ is another Brownian motion capturing the financial risk. 

Suppose that the insurer follows an investment strategy $Y \triangleq \Big\{ y_t: t \geq 0 \Big\}$, where $y_t$ denotes the amount invested in the risky asset, and the remaining $m_t - y_t$ is placed in the risk-free asset. Then, the insurer’s surplus (wealth) process evolves as: 
\begin{equation}
    \mathrm{d}m_t = \Big( x_t \theta_t l + y_t (\mu - r) + m_t r \Big) \mathrm{d}t + x_t \eta \mathrm{d}W^I_t + y_t \sigma \mathrm{d} W^S_t. \label{Wealth Process}
\end{equation} 

\subsection{Correlation Ambiguity and Optimization Objective}

As illustrated in \citet{chen2025dynamic}, when the insurance risk $W^I$ and the financial risk $W^S$ are uncorrelated, insurers' underwriting and investment decisions are independent. When they are correlated, however, the joint dependence may fundamentally change insurers' behavior. For instance, a relatively high and positive correlation may induce insurers to optimally choose a zero underwriting position. Also, the correlation plays a crucial role in deciding the optimal intensity of regulation imposed on insurers' asset management. 

Despite the importance of knowing the dependence structure, in practice, estimating the correlation between insurance and financial risks is notoriously difficult. Historical data are often noisy and sample correlations highly unstable, especially when catastrophic losses or market stress events are involved. Moreover, insurers typically lack a full understanding of the economic mechanisms generating the dependence structure. As a result, they cannot fully trust empirical estimates and must acknowledge the presence of correlation ambiguity. 

Following \citet{fouque2016portfolio} and \citet{cheng2024robust}, we assume that the insurer’s subjective belief about the correlation is represented by an alternative coefficient: 
\begin{eqnarray}
    \rho^{\xi} = \rho + \xi, \notag
\end{eqnarray}
such that $\mathrm{d}W^I_t \mathrm{d}W^S_t = \rho^{\xi} \mathrm{d}t$. While $\rho$ is the estimated value from historical data (reference), $\xi$ captures the potential misspecification from the reference. Insurers only know that $\xi$ lies within a feasible ambiguity set:
\begin{equation}
    \xi^2 \leq \phi^2, \notag
\end{equation}
where $\phi \geq 0$ measures the degree of ambiguity. A larger $\phi$ corresponds to a wider set of plausible values and thus to greater model uncertainty. Accordingly, at each time $t$, the set of admissible correlation coefficients is given by $\mathcal{R}_t(\omega): \Omega \to [\rho - \phi, \rho + \phi] \cap [-1, 1]$. To formalize the insurer’s decision problem, we next define the class of admissible strategies.

\begin{definition}
    For each $t \in [0,T]$, a strategy $\pi \triangleq \Big\{\pi_s = (x_s,y_s): x_s \geq 0,\; t \leq s \leq T \Big\}$ is said to be admissible, denoted by $\pi \in \varPi_t$, if $\pi$ is adapted to the filtration $\{\mathcal{F}_s\}_{s\in[t,T]}$ and satisfies: 
    \begin{equation}
        \pi_t \in M^1(t,T) \triangleq \left\{ f : \|f\|_{M^1} = \hat{\mathbb{E}}\!\left[\int_t^T |f_s|\,\mathrm{d}s\right] < \infty \right\}, \notag
    \end{equation}
    where $\hat{\mathbb{E}}$ denotes the nonlinear expectation under the G-expectation framework.
\end{definition}

\begin{remark}
    The probabilistic setup and optimal control problem in this paper are formulated within the G-expectation framework developed by \citet{peng2007g, peng2008multi, peng2019nonlinear}. Due to the close similarity between our baseline model and that of \citet{cheng2024robust}, we do not repeat the formal definitions of G-expectation, G-Brownian motion, or the associated stochastic calculus here; these can be found in the Appendix of their paper. Under this framework, correlation ambiguity induces a non-dominated set of priors, and the G-expectation serves as the appropriate nonlinear expectation operator for robust optimization. Given any admissible strategy, standard Lipschitz conditions on the drift and diffusion coefficients ensure that the wealth process \eqref{Wealth Process} admits a unique strong solution $m \in M^1(0,T)$. 
\end{remark}

Given an admissible strategy, the insurer evaluates future outcomes under model uncertainty induced by correlation ambiguity. Assume that the representative insurer has a utility function $u(m)$, satisfying $u^{\prime}(m) > 0$ and $u^{\prime \prime}(m) < 0$, reflecting risk aversion. Given initial wealth $m$ at time $t$, the insurer chooses a strategy $\pi \in \varPi_t$ to maximize the worst-case expected utility of terminal wealth. Formally, the robust optimization problem is defined as: 
\begin{equation}
    V(t, m) \triangleq \sup_{\pi \in \varPi_t} \hat{U}^{t, m, \pi} = \sup_{\pi \in \varPi_t } \inf_{\xi^2 \leq \phi^2}  \mathbb{E}^{\mathbb{P}_\xi}_{t, m} [ u(m_T) ]. \label{Objective}
\end{equation}
Here, under each admissible correlation distortion $\xi$, the probability measure $\mathbb{P}_\xi$ is defined by $\mathbb{P}_\xi(A) = \mathbb{P}^0(\{\omega: (L(\omega; \xi), S(\omega;\xi)) \in A\})$ for $A \in \mathcal{F}_T$, where $(L(\omega; \xi), S(\omega;\xi))$ is the unique strong solution to SDEs \eqref{Loss process} and \eqref{Price process} under correlation $\rho^\xi$. Intuitively, $\mathbb{P}_\xi$ represents the insurer’s subjective belief about the data-generating process under correlation distortion $\xi$.  

Based on the insurer’s robust optimization problem, we now define a competitive equilibrium for the insurance market.

\begin{definition}\label{Equilibrium Definition}
    A Markovian competitive equilibrium consists of an insurance price process $\Theta$, an insurance demand process $D$, an insurer’s wealth process $M$, an insurance supply process $X$, and an insurer’s investment process $Y$, such that these processes are jointly consistent with the insurer’s optimization problem \eqref{Objective} and satisfy the market-clearing condition $D=X$.
\end{definition}

\section{Equilibrium Analysis} \label{Section Equilibrium}

\subsection{Solution to Robust Decision Problem}

In this section, we solve the insurer’s robust decision problem first. The following proposition provides the verification theorem characterizing the value function. 

\begin{proposition}
    Assume $\varPi_t$ is compact. Then the value function $V(t, m)$ is the unique deterministic continuous viscosity solution of the Hamilton-Jacobi-Bellman-Isaacs (HJBI) equation: 
    \begin{equation}
        0 = V_{t} + \sup_{\pi \in \varPi_t} \inf_{\xi^2 \leq \phi^2} \left\{ V_{m} \Big[ x \theta l  + y (\mu - r) + mr \Big] + \frac{1}{2} V_{mm} \Big[ x^2 \eta^2 + 2 (\rho + \xi) x \eta y \sigma + y^2 \sigma^2 \Big] \right\}, \label{Objective 2}
    \end{equation}
    with the boundary condition $V(T, m) = u(m)$. 
\end{proposition}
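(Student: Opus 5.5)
The plan is to follow the standard route for stochastic control under G-expectation, as in \citet{fouque2016portfolio} and \citet{cheng2024robust}, which rests on Peng's nonlinear Feynman--Kac theory. First I will rewrite the problem in G-framework form. I introduce a two-dimensional G-Brownian motion $(W^I, W^S)$ whose quadratic covariation matrix ranges over
\begin{equation}
\Gamma = \left\{ \begin{pmatrix} 1 & \rho^\xi \\ \rho^\xi & 1 \end{pmatrix} : \xi^2 \le \phi^2, \ \rho^\xi \in [-1,1] \right\}. \notag
\end{equation}
Its sublinear generator is
\begin{equation}
G(A) = \tfrac12 \sup_{\gamma \in \Gamma} \mathrm{tr}(A\gamma). \notag
\end{equation}
By the representation theorem for G-expectation (Denis--Hu--Peng), $\inf_{\xi} \mathbb{E}^{\mathbb{P}_\xi}[u(m_T)] = -\hat{\mathbb{E}}[-u(m_T)]$. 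So the objective \eqref{Objective} is a sup over $\pi$ of a sublinear expectation of a functional of the G-SDE \eqref{Wealth Process}.

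Next comes the dynamic programming principle. Using the compactness of $\varPi_t$ and the tower property of the conditional G-expectation, I will show
\begin{equation}
V(t,m) = \sup_{\pi} \left(-\hat{\mathbb{E}}\left[-V(t+\delta, m^{t,m,\pi}_{t+\delta})\right]\right). \notag
\end{equation}
Alongside this, I will derive the a priori estimates $\hat{\mathbb{E}}[|m_s^{t,m,\pi}-m_s^{t,m',\pi}|^2]\le C|m-m'|^2$ and $\hat{\mathbb{E}}[|m_s - m|^2] \le C(s-t)$. These follow from the Lipschitz drift and diffusion coefficients and from the G-BDG inequalities. Together they give that $V$ is deterministic and continuous, and locally Lipschitz in $m$ and $\tfrac12$-Hölder in $t$, provided $u$ has polynomial growth. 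Determinism follows because the canonical space and the G-Brownian increments are independent of $\mathcal{F}_t$. This is the argument of Peng's stochastic control results under G-expectation, and I would cite it rather than redo it.

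I then verify the viscosity property. Take a smooth test function $\varphi \ge V$ touching $V$ at $(t,m)$, fix a constant control $\pi$ on $[t,t+\delta]$, and apply G-Itô's formula to $\varphi(s,m_s)$. The term $\frac12\varphi_{mm}\,\mathrm{d}\langle \cdot\rangle$ produces $\varphi_{mm}\, x\eta y\sigma\, \mathrm{d}\langle W^I,W^S\rangle_s$. Taking $-\hat{\mathbb{E}}[-\cdot]$ of the correlation term converts it into $\inf_{\xi^2\le\phi^2}$ of $(\rho+\xi)x\eta y\sigma\varphi_{mm}$, up to an $o(\delta)$ remainder. Dividing by $\delta$ and sending $\delta\to0$ gives the subsolution inequality for the Hamiltonian in \eqref{Objective 2}. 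For the supersolution inequality I take $\epsilon$-optimal controls in the DPP and use compactness of $\varPi_t$ together with continuity of the Hamiltonian in $\pi$, so that the sup is attained in the limit. The terminal condition $V(T,m)=u(m)$ is immediate.

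The main obstacle is uniqueness. The Hamiltonian $\sup_\pi\inf_\xi$ is continuous and degenerate elliptic in $V_{mm}$, since it is monotone. However, the control set enters the coefficients, and $u$ may be unbounded (e.g.\ CARA), so the standard comparison principle of Crandall--Ishii--Lions needs a growth class. I will first try to apply the comparison theorem in Peng's Appendix C, in the form used by \citet{cheng2024robust}. It covers fully nonlinear parabolic equations with Lipschitz coefficients uniformly over a compact control set, within the class of functions of polynomial growth. Compactness of $\varPi_t$ gives the required uniform Lipschitz bounds on $b(\pi,m)=x\theta l+y(\mu-r)+mr$ and on the diffusion. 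For exponential utility I would instead apply comparison to $e^{-\lambda t}$-transformed or bounded-growth versions of the solution.
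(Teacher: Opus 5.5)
Your route is the one behind the paper's proof, which consists solely of invoking Theorem 2.2 of \citet{fouque2016portfolio} and Proposition 1 of \citet{cheng2024robust}; you are in effect unpacking those citations. The part you spell out yourself, however, has the two halves of the viscosity verification reversed, and the half you assign to a constant control fails as written.

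Write $\mathcal{E}[\cdot]=-\hat{\mathbb{E}}[-\,\cdot\,]$. A fixed constant control $\pi$ gives only $V(t,m)\ge\mathcal{E}[V(t+\delta,m^{\pi}_{t+\delta})]$. This combines with a test function touching from below ($\varphi\le V$), not from above. Your G-It\^o computation then yields $\varphi_t+\inf_{\xi}f(\pi,\xi)\le 0$ for all $\pi$ (with $f$ built from $\varphi_m,\varphi_{mm}$), which is the \emph{supersolution} inequality. With $\varphi\ge V$ the two inequalities point in opposite directions and nothing follows.

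The subsolution inequality $\varphi_t+\sup_\pi\inf_\xi f\ge 0$ must come from $\delta^2$-optimal controls, which are path-dependent. There, compactness and continuity of the Hamiltonian are needed but not sufficient. You must bound $\mathcal{E}\bigl[\int_t^{t+\delta}c_s\,\mathrm{d}\langle W^I,W^S\rangle_s\bigr]$ from above for the random integrand $c_s=\varphi_{mm}(s,m_s)\,x_s\eta y_s\sigma$. The conversion into $\delta\inf_\xi(\cdot)$ that you use is immediate only for a deterministic integrand.

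The missing fact is that
$D=\int_t^{t+\delta}c_s\,\mathrm{d}\langle W^I,W^S\rangle_s-\int_t^{t+\delta}\min_{|\xi|\le\phi}c_s(\rho+\xi)\,\mathrm{d}s$
is nonnegative q.s.\ with $\mathcal{E}[D]=0$. This is Peng's result that $\tfrac12\int\mathrm{tr}(\eta_s\,\mathrm{d}\langle B\rangle_s)-\int G(\eta_s)\,\mathrm{d}s$ is a G-martingale, applied to $-\eta$. Hence $\mathcal{E}[A+D+M]\le\kappa+\hat{\mathbb{E}}[A-\kappa]$, where:
\begin{itemize}
\item $A$ collects the $\mathrm{d}s$-terms, with the covariation replaced by its pointwise minimum;
\item $M$ is the G-It\^o integral;
\item $\kappa=\delta\,(\varphi_t+\sup_\pi\inf_\xi f)(t,m)$.
\end{itemize}
Only at this point do compactness and continuity give $\hat{\mathbb{E}}[A-\kappa]\le o(\delta)$. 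Alternatively, argue through comparison of G-BSDEs.

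Second, your remedy for unbounded utilities does not address the actual difficulty. For CARA $u$ the value function grows like $e^{\gamma e^{r(T-t)}|m|}$ as $m\to-\infty$. That is outside the polynomial-growth class of the comparison theorem you cite, and an $e^{-\lambda t}$ factor does nothing to growth in $m$. You need comparison in a class of exponential growth in $m$. One way is to add $\varepsilon\exp\{\lambda(T-t)+k e^{r(T-t)}\sqrt{1+m^2}\}$ to the supersolution, with $k>\gamma$ (more generally, $k$ above the growth rate of the two functions being compared) and $\lambda$ large. This makes the perturbed supersolution strict, because the controls are bounded, the drift is affine in $m$, and the diffusion coefficient does not depend on $m$. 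The paper's statement leaves the uniqueness class unspecified as well. So this is a point where your argument should be sharper than the paper's citation, rather than a defect relative to it.
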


\noindent \textbf{Proof}. The result follows directly from Theorem 2.2 of \citet{fouque2016portfolio} and Proposition 1 of \citet{cheng2024robust}, which establish existence, uniqueness, and the viscosity characterization of the value function under compact admissible control sets. \hfill $\square$ 

To deal with the HJBI equation, we introduce the Lagrangian multipliers $\lambda_1 \ge 0$ and $\lambda_2 \ge 0$ associated with the constraints on $x$ and $\xi$, respectively: 
\begin{align}
    0 = & V_{t} + \sup_{x, y} \inf_{\xi} \bigg\{ V_{m} \Big[ x \theta l  + y (\mu - r) + mr \Big] + \frac{1}{2} V_{mm} \Big[ x^2 \eta^2 + 2 (\rho + \xi) x \eta y \sigma + y^2 \sigma^2 \Big] \notag \\
    & + \lambda_1 x + \frac{1}{2} \lambda_2 (\xi^2 - \phi^2)  \bigg\},   \label{HJBI Equation}
\end{align}
where the Kuhn-Tucker complementary slackness conditions require that:
\begin{equation}
    \left\{
    \begin{aligned}
        & \lambda_1 \geq 0, \ x \geq 0, \ \lambda_1 x = 0, \\
        & \lambda_2 \geq 0, \ \xi^2 - \phi^2 \leq 0, \ \lambda_2 (\xi^2 - \phi^2) = 0. 
    \end{aligned}
    \right. \notag 
\end{equation}
Assume that $V$ is continuously differentiable, with $V_m > 0$ and $V_{mm} < 0$. The first-order conditions (FOCs) with respect to $x$, $y$ and $\xi$ are: 
\begin{equation}
    \left\{
    \begin{aligned}
        & x + \frac{(\rho + \xi) \sigma}{\eta} y = -\frac{V_m}{V_{mm}} \frac{\theta l}{\eta^2} - \frac{1}{V_{mm}} \frac{\lambda_1}{\eta^2}, \\
        & y + \frac{(\rho + \xi) \eta}{\sigma} x = -\frac{V_m}{V_{mm}} \frac{\mu - r}{\sigma^2}, \\
        & x y = - \frac{1}{V_{mm}} \frac{\lambda_2 \xi}{\eta \sigma}. 
    \end{aligned}
    \right. \notag 
\end{equation} 
We proceed by analyzing the relevant cases separately. For simplicity, we assume $\phi > 0$. 

Case 1: $\lambda_1 > 0$ and $\lambda_2 > 0$. By complementary slackness, $x = 0$ and $\xi^2 = \phi^2$. However, the third FOC implies $\xi = -\frac{1}{\lambda_2} V_{mm} x \eta y \sigma = 0$, which contradicts with $\xi = \pm \phi$. 

Case 2: $\lambda_1 > 0$ and $\lambda_2 = 0$. Then, $x = 0$ and the second FOC yields $y = -\frac{V_m}{V_{mm}} \frac{\mu - r}{\sigma^2}$. Substituting $x = 0$ into the first FOC gives the consistency condition for the Lagrange multiplier $\lambda_1$: 
\begin{equation}
    \lambda_1 = V_m \left[ (\rho + \xi) \frac{\mu - r}{\sigma} \eta - \theta l \right], \notag
\end{equation}
Since $\lambda_1 > 0$, it requires $\xi > \frac{\theta l}{\eta} \frac{\sigma}{\mu - r} - \rho$. Thus, a feasible solution exists with $\xi \in [-\phi,\phi]$ if and only if the ambiguity radius $\phi$ satisfies: 
\begin{equation}
    \phi > \frac{\theta l}{\eta} \frac{\sigma}{\mu - r} - \rho. \notag
\end{equation}

Case 3: $\lambda_1 = 0$ and $\lambda_2 > 0$. Then, $\xi^2 = \phi^2$, leading to two cases. If $\xi = \phi$, then combining the first and second FOCs solves: 
\begin{equation}
    x = -\frac{V_m}{V_{mm}} \frac{\theta l \sigma - (\rho + \phi) (\mu - r) \eta}{[1 - (\rho + \phi)^2] \eta^2 \sigma}, \quad y = -\frac{V_m}{V_{mm}} \frac{(\mu - r) \eta - (\rho + \phi) \theta l \sigma}{[1 - (\rho + \phi)^2] \eta \sigma^2}, \notag    
\end{equation}
given $\rho + \phi \neq \pm 1$. Since $x \geq 0$ and $\lambda_2 =- \frac{1}{\phi} V_{mm} x \eta y \sigma > 0$, it requires: 
\begin{equation}
    \phi < \frac{\theta l}{\eta} \frac{\sigma}{\mu - r} - \rho, \quad \phi < \frac{\eta}{\theta l} \frac{\mu - r}{\sigma} - \rho, \notag
\end{equation}
given $\theta > 0$. If $\xi = -\phi$, then combining the first and second FOCs solves: 
\begin{equation}
    x = -\frac{V_m}{V_{mm}} \frac{\theta l \sigma - (\rho - \phi) (\mu - r) \eta}{[1 - (\rho - \phi)^2] \eta^2 \sigma}, \quad y = -\frac{V_m}{V_{mm}} \frac{(\mu - r) \eta - (\rho - \phi) \theta l \sigma}{[1 - (\rho - \phi)^2] \eta \sigma^2}, \notag    
\end{equation}
given $\rho - \phi \neq \pm 1$. Since $x \geq 0$ and $\lambda_2 = \frac{1}{\phi} V_{mm} x \eta y \sigma > 0$, it requires: 
\begin{equation}
    \rho - \frac{\theta l}{\eta} \frac{\sigma}{\mu - r} < \phi < \rho -  \frac{\eta}{\theta l} \frac{\mu - r}{\sigma}.  \notag
\end{equation}

Case 4: $\lambda_1 = 0$ and $\lambda_2 = 0$. Then, the third FOC indicates $x = 0$ or $y = 0$. If $x = 0$, then $y = -\frac{V_m}{V_{mm}} \frac{\mu - r}{\sigma^2}$, and the first FOC indicates $\xi = \frac{\theta l}{\eta} \frac{\sigma}{\mu - r} - \rho$. It is feasible if and only if: 
\begin{equation}
    \phi \geq \frac{\theta l}{\eta} \frac{\sigma}{\mu - r} - \rho, \quad \phi \geq \rho - \frac{\theta l}{\eta} \frac{\sigma}{\mu - r}. \notag
\end{equation}
If $x \neq 0$, then $x = -\frac{V_m}{V_{mm}} \frac{\theta l}{\eta^2}$, $y = 0$, and $\xi = \frac{\eta}{\theta l} \frac{\mu - r}{\sigma} - \rho$. It is feasible if and only if: 
\begin{equation}
    \phi \geq \frac{\eta}{\theta l} \frac{\mu - r}{\sigma} - \rho, \quad \phi \geq \rho - \frac{\eta}{\theta l} \frac{\mu - r}{\sigma}. \notag
\end{equation}

For convenience, denote $\psi = \frac{\theta l}{\eta} \frac{\sigma}{\mu - r}$. Economically, while $\frac{\mu - r}{\sigma}$ represents the excess return per unit of financial risk, the ratio $\frac{\theta l}{\eta}$ captures the gain per unit of underwriting loss risk. Hence, $\psi$ measures the relative profitability between underwriting and investment activities. Based on the preceding analysis, the interaction between the correlation parameter $\rho$, the ambiguity radius $\phi$, and the profitability ratio $\psi$ determines the insurer’s optimal underwriting and investment behavior, summarized in the following proposition. 

\begin{proposition}\label{verification theorem}
    Define $f(t, m, \pi, \xi) = V_{m} \Big[ x \theta l  + y (\mu - r) + mr \Big] + \frac{1}{2} V_{mm} \Big[ x^2 \eta^2 + 2 (\rho + \xi) x \eta y \sigma + y^2 \sigma^2 \Big]$. Assume $V_m > 0$, $V_{mm} < 0$, $\phi > 0$, $\theta > 0$, $|\rho + \phi| < 1$, $|\rho - \phi| < 1$, and $(\pi^{\ast}, \xi^{\ast})$ is defined as follows: 
    \begin{enumerate}
        \item If $\phi \geq \psi - \rho$, then: 
        \begin{equation}
            \pi^{\ast} = \left(0, -\frac{V_m}{V_{mm}} \frac{\mu - r}{\sigma^2}  \right), \quad \xi^{\ast} \in [\psi - \rho, \phi ].  \notag 
        \end{equation} 
        
        \item If $\phi < \psi - \rho$, and $\phi < \psi^{-1} - \rho$, then:
        \begin{equation}
            \pi^{\ast} = \left(-\frac{V_m}{V_{mm}} \frac{\theta l \sigma - (\rho + \phi) (\mu - r) \eta}{[1 - (\rho + \phi)^2] \eta^2 \sigma}, -\frac{V_m}{V_{mm}} \frac{(\mu - r) \eta - (\rho + \phi) \theta l \sigma}{[1 - (\rho + \phi)^2] \eta \sigma^2} \right), \quad \xi^{\ast} = \phi.  \notag 
        \end{equation} 

        \item If $\rho - \psi < \phi < \rho - \psi^{-1}$, then:
        \begin{equation}
            \pi^{\ast} = \left(-\frac{V_m}{V_{mm}} \frac{\theta l \sigma - (\rho - \phi) (\mu - r) \eta}{[1 - (\rho - \phi)^2] \eta^2 \sigma}, -\frac{V_m}{V_{mm}} \frac{(\mu - r) \eta - (\rho - \phi) \theta l \sigma}{[1 - (\rho - \phi)^2] \eta \sigma^2} \right), \quad \xi^{\ast} = -\phi.  \notag 
        \end{equation}         

        \item If $\phi \geq \psi^{-1} - \rho$, and $\phi \geq \rho - \psi^{-1}$, then:
        \begin{equation}
            \pi^{\ast} = \left(-\frac{V_m}{V_{mm}} \frac{\theta l}{\eta^2}, 0\right), \quad \xi^{\ast} = \psi^{-1} - \rho.  \notag 
        \end{equation}    
    \end{enumerate}
    Then $f(t, m, \pi^{\ast}, \xi^{\ast}) = \sup\limits_{x \geq 0, y} \inf\limits_{\xi^2 \leq \phi^2}  f(t, m, \pi, \xi)$. Consequently, if $\pi^{\ast} \in \varPi_t$ and the HJBI equation \eqref{HJBI Equation} admits a solution $V$, then $(\pi^{\ast}, \xi^{\ast})$ is the optimal control of \eqref{Objective 2}. 
\end{proposition}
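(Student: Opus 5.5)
The plan is to verify, case by case, that $(\pi^{\ast},\xi^{\ast})$ is a saddle point of $f$ on $\{x\ge 0,\,y\in\mathbb{R}\}\times[-\phi,\phi]$. A saddle point means
\[
f(\pi,\xi^{\ast})\le f(\pi^{\ast},\xi^{\ast})\le f(\pi^{\ast},\xi)
\]
for all admissible $\pi$ and $\xi$. Once this holds, the standard inequality chain
\[
\sup_\pi\inf_\xi f\ge \inf_\xi f(\pi^{\ast},\xi)=f(\pi^{\ast},\xi^{\ast})=\sup_\pi f(\pi,\xi^{\ast})\ge \sup_\pi\inf_\xi f
\]
gives $f(\pi^{\ast},\xi^{\ast})=\sup_{x\ge0,y}\inf_{\xi^2\le\phi^2}f$. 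The final claim then follows by substituting into the HJBI equation \eqref{HJBI Equation} and invoking the verification result of the preceding proposition.

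\emph{Minimization in $\xi$.} For fixed $\pi$, $f$ is affine in $\xi$ with slope $V_{mm}x\eta y\sigma$. The right saddle inequality therefore reduces to a sign check.
\begin{itemize}
\item[] If $x^{\ast}y^{\ast}=0$ (cases 1 and 4), $f(\pi^{\ast},\cdot)$ is constant, so any $\xi^{\ast}\in[-\phi,\phi]$ works; one only needs the stated $\xi^{\ast}$ to be feasible, which the case hypotheses guarantee (e.g.\ $\psi-\rho\le\phi$, together with $\psi-\rho\ge-\phi$ where needed).
\item[] In case 2 we need $x^{\ast}y^{\ast}\ge0$, so that the slope is $\le0$ and $\xi=\phi$ minimizes. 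Here $x^{\ast}\propto\psi-(\rho+\phi)>0$ and $y^{\ast}\propto1-(\rho+\phi)\psi>0$. The latter follows from $\phi<\psi^{-1}-\rho$ and $\psi>0$, using $1-(\rho+\phi)^2>0$.
\item[] In case 3 we need $x^{\ast}y^{\ast}\le0$: $x^{\ast}\propto\psi-(\rho-\phi)>0$ by $\phi>\rho-\psi$, and $y^{\ast}\propto1-(\rho-\phi)\psi<0$ by $\phi<\rho-\psi^{-1}$.
\end{itemize}

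\emph{Maximization in $\pi$.} For fixed $\xi^{\ast}$ with $|\rho+\xi^{\ast}|<1$ (this needs checking in cases 1 and 4; it holds since $\xi^{\ast}\in[-\phi,\phi]$ and $|\rho\pm\phi|<1$), $f(\cdot,\xi^{\ast})$ is strictly concave in $(x,y)$, because $V_{mm}<0$ and the covariance form is positive definite. The KKT conditions on $\{x\ge0\}$ are therefore necessary and sufficient. In cases 2 and 3 with $x^{\ast}>0$, the unconstrained first-order conditions displayed before the proposition (with $\lambda_1=0$) are solved by the given formulas, which is a direct linear-algebra check. In case 4, $y^{\ast}=0$ and $x^{\ast}=-\frac{V_m}{V_{mm}}\frac{\theta l}{\eta^2}$ satisfy both FOCs exactly when $\rho+\xi^{\ast}=\psi^{-1}$. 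In case 1, with $x^{\ast}=0$, the $y$-FOC holds and the $x$-derivative equals $V_m[\theta l-(\rho+\xi^{\ast})\eta(\mu-r)/\sigma]$. This is $\le0$ exactly when $\rho+\xi^{\ast}\ge\psi$, which is precisely the restriction $\xi^{\ast}\ge\psi-\rho$; it is the KKT condition $\lambda_1\ge0$.

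The main obstacle is bookkeeping rather than depth. The sign verifications must use $|\rho\pm\phi|<1$ and $\theta>0$ consistently, and the case-4 choice $\xi^{\ast}=\psi^{-1}-\rho$ must actually lie in $[-\phi,\phi]$. The hypotheses $\phi\ge\pm(\psi^{-1}-\rho)$ give this directly.
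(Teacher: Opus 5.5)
Your saddle-point route is correct in substance, and it differs from the paper's. The paper builds one Lagrangian with multipliers $\lambda_1$ (for $x\ge 0$) and $\lambda_2$ (for $\xi^2\le\phi^2$) and enumerates the sign patterns of $(\lambda_1,\lambda_2)$ from the first-order conditions. Its written proof then mostly shows that the four resulting parameter regions are mutually exclusive and jointly exhaustive, using $|\rho\pm\phi|<1$. For example, $\rho+\phi\ge\psi$ forces $\psi<1$, which rules out cases 3 and 4.

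You instead check both saddle inequalities directly. You use that $f$ is affine in $\xi$, so the inner problem reduces to the sign of $x^\ast y^\ast$, and strictly concave in $(x,y)$ when $|\rho+\xi|<1$, so the KKT conditions are sufficient. This gives genuine sufficiency, which the paper's FOC enumeration only asserts. It also makes exclusivity unnecessary: by interchangeability of saddle points and strict concavity, any two saddle points share the same $\pi^\ast$, so overlapping regions could never assign conflicting policies.

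What your route does not give is exhaustiveness, i.e.\ that $(\pi^\ast,\xi^\ast)$ is defined for every admissible $(\rho,\phi,\psi)$. The CARA proposition and the equilibrium theorem rely on this, so add the short check:
\begin{itemize}
\item if $\rho+\phi\ge\psi$, you are in case 1;
\item otherwise, if $\rho+\phi<\psi^{-1}$, you are in case 2;
\item otherwise $\psi^{-1}\le\rho+\phi<\psi$, and you are in case 3 or case 4 according to whether $\rho-\phi>\psi^{-1}$.
\end{itemize}

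One claim in your case 1 is false. The hypothesis $\phi\ge\psi-\rho$ does not imply $\psi-\rho\ge-\phi$; take $\psi=0.1$, $\rho=0.5$, $\phi=0.1$. Then the stated interval $[\psi-\rho,\phi]=[-0.4,0.1]$ contains infeasible distortions. Your chain step $\sup_\pi f(\pi,\xi^\ast)\ge\sup_\pi\inf_\xi f$ uses $\inf_\xi f(\pi,\xi)\le f(\pi,\xi^\ast)$, which needs $\xi^\ast\in[-\phi,\phi]$.

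The repair is immediate. Run the saddle argument with some $\xi^\ast\in[\max\{\psi-\rho,-\phi\},\phi]$, e.g.\ $\xi^\ast=\phi$. Then note that $f(\pi^\ast,\cdot)$ is constant because $x^\ast=0$, so the value identity holds for every $\xi^\ast$ in the stated interval. To call $\xi^\ast$ an optimal distortion, however, the interval in case 1 should be intersected with $[-\phi,\phi]$. This is a looseness in the statement itself that your proof should flag rather than claim is guaranteed by the hypotheses.
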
 

\noindent \textbf{Proof}. The result follows by solving the first-order conditions derived above and verifying the Kuhn-Tucker conditions in each candidate region of $(\rho, \phi, \psi)$. We now show that the four parameter regimes listed in the proposition are mutually exclusive and collectively exhaustive.

For Case 3, the stated conditions implicitly require $\phi < \psi - \rho$. Indeed, if $\phi \ge \psi - \rho$, then $\rho > \psi^{-1} + \phi$ cannot hold simultaneously, generating a contradiction. Likewise, Case 3 requires $\phi \ge \psi^{-1} - \rho$, because if $\phi < \psi^{-1} - \rho$, this contradicts the requirement $\phi < \rho - \psi^{-1}$. For Case 4, the conditions imply $\phi < \psi - \rho$. Otherwise, if $\phi \ge \psi - \rho$, the condition $\phi \ge \psi^{-1} - \rho$ cannot hold. Thus, the parameter regions associated with the four cases cannot overlap. 

We now show they cover all admissible $(\rho, \phi, \psi)$. If $\phi < \psi - \rho$ and $\phi \ge \psi^{-1} - \rho$, then necessarily $\phi > \rho - \psi$. Given this, if $\phi < \rho - \psi^{-1}$, we fall into Case 3; if $\phi \ge \rho - \psi^{-1}$, we fall into Case 4. Therefore, every admissible pair $(\rho, \phi)$ relative to $\psi$ belongs to exactly one of the four regions, proving that the classification is jointly exhaustive.

Hence, for any given $(\rho, \phi, \psi)$, exactly one of the four policies $(\pi^\ast, \xi^\ast)$ satisfies the optimality conditions and thus solves the insurer’s robust control problem. \hfill $\square$

\subsection{Case of CARA Utility} 

Now, we consider a specific case in which the insurer has an exponential (constant absolute risk aversion, CARA) utility: 
\begin{equation}
    U(m) = - \frac{1}{\gamma} e^{-\gamma m}, \notag 
\end{equation}
where $\gamma > 0$ represents the risk aversion level. This form of utility is widely used in financial economics because of its mathematical tractability. 

\begin{proposition} \label{Proposition CARA Optimal Strategy}
    Assume $\theta > 0$, $|\rho + \phi| < 1$, $|\rho - \phi| < 1$. Given the price process $\Theta$, for an insurer with CARA utility and initial wealth $m$ at time $t$, the optimal underwriting-investment strategy is $\pi^{\ast} = \Big\{ \pi^{\ast}_s: t \leq s \leq T \Big\}$, the optimal correlation distortion is $\xi^{\ast} = \Big\{ \xi^{\ast}_s: t \leq s \leq T \Big\}$ and the corresponding value function $V(t, m)$ is:
    \begin{equation}
        V(t, m) = -\frac{1}{\gamma} \exp \left\{ - \gamma m e^{r(T-t)} - \int_t^T p_s(\rho, \phi, \psi_s) \mathrm{d}s \right\}, \label{Value Function}
    \end{equation}
    where $\pi^{\ast}_s$, $\xi_s^{\ast}$ and $p_s(\rho, \phi, \psi_s)$ are given as follows:
    \begin{enumerate}
        \item If $\phi \geq \psi_s - \rho$, then:  
        \begin{equation}
            \pi^{\ast}_s = \left(0, \frac{\mu - r}{\gamma \sigma^2} e^{-r(T-s)} \right), \quad \xi_s^{\ast} \in [\psi_s - \rho, \phi], \quad p_s(\rho, \phi, \psi_s) = \frac{ (\mu - r)^2}{2 \sigma^2}. \label{zero underwriting}
        \end{equation}

        \item If $\phi < \psi_s - \rho$, and $\phi < \psi_s^{-1} - \rho$, then:
        \begin{align}
            & \pi_s^{\ast} = \left(\frac{\theta_s l \sigma - (\rho + \phi) (\mu - r) \eta}{\gamma [1 - (\rho + \phi)^2] \eta^2 \sigma} e^{-r(T-s)}, \frac{(\mu - r) \eta - (\rho + \phi) \theta_s l \sigma}{\gamma [1 - (\rho + \phi)^2] \eta \sigma^2} e^{-r(T-s)} \right), \quad \xi_s^{\ast} = \phi, \label{positive distortion} \\
            & p_s(\rho, \phi, \psi_s) = \frac{\theta_s^2 l^2 \sigma^2 - 2 (\rho + \phi) \theta_s l (\mu - r) \eta \sigma + (\mu - r)^2 \eta^2}{2 [1 - (\rho + \phi)^2 ] \eta^2 \sigma^2}. \notag
        \end{align}    

        \item If $\rho - \psi_s < \phi < \rho - \psi_s^{-1}$, then:
        \begin{align}
            & \pi_s^{\ast} = \left(\frac{\theta_s l \sigma - (\rho - \phi) (\mu - r) \eta}{\gamma [1 - (\rho - \phi)^2] \eta^2 \sigma} e^{-r(T-s)}, \frac{(\mu - r) \eta - (\rho - \phi) \theta_s l \sigma}{\gamma [1 - (\rho - \phi)^2] \eta \sigma^2} e^{-r(T-s)} \right), \quad \xi_s^{\ast} = -\phi, \label{negative distortion} \\
            & p_s(\rho, \phi, \psi_s) = \frac{\theta_s^2 l^2 \sigma^2 - 2 (\rho - \phi) \theta_s l (\mu - r) \eta \sigma + (\mu - r)^2 \eta^2}{2 [1 - (\rho - \phi)^2 ] \eta^2 \sigma^2}. \notag 
        \end{align}         

        \item If $\phi \geq \psi_s^{-1} - \rho$, and $\phi \geq \rho - \psi_s^{-1}$, then:
        \begin{equation}
            \pi_s^{\ast} = \left(\frac{\theta_s l}{\gamma \eta^2} e^{-r(T-s)}, 0\right), \quad \xi_s^{\ast} = \psi_s^{-1} - \rho, \quad p_s(\rho, \phi, \psi_s) = \frac{\theta_s^2 l^2}{2 \eta^2}. \label{zero investment} 
        \end{equation}    
    \end{enumerate}
\end{proposition}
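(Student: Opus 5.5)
We will prove the statement by guess-and-verify, using the HJBI characterization and the pointwise optimizer of Proposition \ref{verification theorem}. Following the standard exponential ansatz for CARA problems, we posit
\begin{equation}
V(t,m) = -\frac{1}{\gamma}\exp\left\{-\gamma m e^{r(T-t)} - g(t)\right\}, \quad g(T)=0, \notag
\end{equation}
with $g$ a deterministic function to be determined. The terminal condition $V(T,m)=u(m)$ then holds automatically. We compute
\begin{equation}
V_m = \gamma e^{r(T-t)}(-V)>0, \quad V_{mm} = -\gamma^2 e^{2r(T-t)}(-V)<0, \quad V_t = \left(\gamma m r e^{r(T-t)} + g'(t)\right)(-V). \notag
\end{equation}
In particular $-V_m/V_{mm} = \frac{1}{\gamma}e^{-r(T-t)}$ depends only on $t$. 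Since $V_m>0$ and $V_{mm}<0$, the hypotheses of Proposition \ref{verification theorem} are met.

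Substituting $-V_m/V_{mm}$ into the four regimes of Proposition \ref{verification theorem}, with $\theta=\theta_s$ and $\psi=\psi_s$, gives exactly the controls in \eqref{zero underwriting}--\eqref{zero investment}. These controls are deterministic and bounded on $[t,T]$ whenever $\theta_s$ is bounded, which holds because $\theta_s\le\overline{\theta}$. Hence $\pi^\ast\in M^1(t,T)$ is admissible.

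Next we evaluate the Hamiltonian at the saddle point. Factor out $(-V)$. The term $V_m m r$ cancels against the $m$-dependent part of $V_t$. What remains has the form $(-V)\left[g'(t) + \gamma e^{r(T-t)}(\text{drift}) - \frac{1}{2}\gamma^2 e^{2r(T-t)}(\text{variance})\right]$. Write $\tilde{x} = \gamma e^{r(T-t)}x$ and $\tilde{y} = \gamma e^{r(T-t)}y$. The bracket then becomes $g' + h(\tilde{x},\tilde{y},\xi)$, where
\begin{equation}
h = \tilde{x}\theta l + \tilde{y}(\mu-r) - \tfrac{1}{2}\left[\tilde{x}^2\eta^2 + 2(\rho+\xi)\tilde{x}\tilde{y}\eta\sigma + \tilde{y}^2\sigma^2\right]. \notag
\end{equation}
At the optimum the first-order conditions make $h$ equal to half the drift term, i.e. $\frac{1}{2}(\tilde{x}\theta l + \tilde{y}(\mu-r))$. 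Each regime gives the following value:
\begin{enumerate}
\item In regime 1, $h=(\mu-r)^2/(2\sigma^2)$.
\item In regime 4, $h=\theta^2l^2/(2\eta^2)$.
\item In regimes 2 and 3, $h$ is the bivariate Sharpe-type quadratic form $\frac{1}{2}a^\top\Sigma^{-1}a$, where $a=(\theta l,\mu-r)$ and $\Sigma$ has correlation $\rho\pm\phi$. Expanding this yields the stated $p_s$.
\end{enumerate}
The HJBI equation \eqref{HJBI Equation} thus reduces to the ODE $g'(t) = -p_t$. Its solution is $g(t)=\int_t^T p_s\,\mathrm{d}s$, which gives \eqref{Value Function}.

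The main obstacle is a rigorous verification step. Proposition \ref{verification theorem} establishes the pointwise saddle property of $f$. To conclude that the candidate $V$ is the value function, we need the HJBI uniqueness result, which requires compactness of $\varPi_t$. We plan to handle this by restricting to controls bounded by a large constant $K$ that exceeds the candidate's bounds, and by noting that the candidate is smooth and thus a classical (hence viscosity) solution. Uniqueness then yields $V$, and since the unconstrained optimizer lies in the interior, letting $K\to\infty$ does not change the answer. A subsidiary point is regime 1, where $\xi^\ast$ is non-unique. Here $x^\ast=0$, so $f$ is independent of $\xi$ and any $\xi^\ast$ in the stated interval works. We also need measurability of the regime switching in $s$; this follows because the regime boundaries are deterministic conditions on $\psi_s$, and $p_s$ is measurable and bounded, which ensures the integral is well defined.
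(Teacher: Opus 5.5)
Your route is the paper's own. The paper likewise guesses the exponential form and checks that it solves the HJBI. Your reduction to $g'(t)=-p_t$ reproduces all four regimes and the stated $p_s$ correctly; it uses $-V_m/V_{mm}=\frac{1}{\gamma}e^{-r(T-t)}$, the pointwise saddle point of Proposition \ref{verification theorem}, and $h=\frac{1}{2}a^{\top}\Sigma^{-1}a$ (or its corner versions).

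Two corrections are needed.

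\textbf{The sign of $V_t$.} The correct derivative is $V_t=\bigl(g'(t)-\gamma m r e^{r(T-t)}\bigr)(-V)$. With the plus sign you wrote, the $m$-terms would add to $V_m m r$ rather than cancel against it, as you (rightly) go on to assert.

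\textbf{The limit $K\to\infty$.} The last step of your verification plan, ``letting $K\to\infty$ does not change the answer,'' does not follow from uniqueness of the truncated problems. Uniqueness only shows that $V^K$ equals your candidate for large $K$. It says nothing about strategies that lie outside every bounded class, and for CARA this is not a formality.
\begin{itemize}
\item Along any admissible strategy, take $\xi$ pointwise worst-case, so that $h_u\le p_u$. Then $u(m_T)=V(t,m)\,e^{\int_t^T(p_u-h_u)\,\mathrm{d}u}\,\mathcal{E}(-N)_T$, with $N=\int(\tilde{x}\eta\,\mathrm{d}W^I+\tilde{y}\sigma\,\mathrm{d}W^S)$.
\item The upper bound $V(t,m)$ therefore requires $\mathcal{E}(-N)$ to be a true martingale. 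A first-moment condition of $M^1$ type does not guarantee this.
\item Without such a restriction, doubling-type positions in the stock alone (on which $\xi$ has no effect) would drive the value above your candidate.
\end{itemize}
Rather than claim the limit, keep the compactness (boundedness) of $\varPi_t$ as a standing hypothesis, exactly as the paper does through Proposition 1. For a fixed large $K$, your argument is then complete.
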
 

\noindent \textbf{Proof}. Following \citet{yang2005optimal}, we guess that the value function takes the form of \eqref{Value Function}. It can be efficiently verified that $V$ satisfies the HJBI equation \eqref{HJBI Equation}.  \hfill $\square$

\subsection{Market Equilibrium} \label{Subsection Market Equilibrium}

In equilibrium, the competitive price of insurance must clear the market, that is, $x^{\ast}_s(\theta^{\ast}_s) = d(\theta^{\ast}_s)$. The insurer’s supply function is characterized in Proposition \ref{Proposition CARA Optimal Strategy}, while the demand function is given in \eqref{demand function}. By combining supply and demand, the competitive equilibrium of the insurance market can be fully characterized as follows.

\begin{theorem} \label{Theorem Market Equilibrium}
    Assume $|\rho + \phi| < 1$ and $|\rho - \phi| < 1$. Depending on the parameters of the insurance and financial markets, the competitive equilibrium of the insurance market is given as follows: 
    \begin{enumerate}
        \item \textbf{Zero Underwriting Equilibrium.} If: 
        \begin{equation}
            \rho + \phi \geq \frac{\alpha \eta \sigma}{\mu - r}, \label{condition 1}
        \end{equation}
        then the equilibrium price equals the regulatory upper bound: 
        \begin{equation}
            \theta^{\ast}_s = \overline{\theta} = \frac{\alpha \eta^2}{l}. \notag
        \end{equation}
        At this price, the CARA insurer optimally chooses: 
        \begin{equation}
            x^{\ast}_s(\theta^{\ast}_s) = 0, \quad y^{\ast}_s(\theta^{\ast}_s) = \frac{\mu - r}{\gamma \sigma^2} e^{-r(T-s)}. \notag
        \end{equation}

        \item \textbf{Upper-Bound Correlation-Distorted Equilibrium.} If: 
        \begin{equation}
            \frac{\mu - r}{2 \eta \sigma} \frac{1}{\gamma} e^{-r(T-s)} - \sqrt{ \left(\frac{\mu - r}{2 \eta \sigma} \frac{1}{\gamma} \right)^2 e^{-2r(T-s)} + 1 } \leq \rho + \phi < \min\left\{ \frac{\alpha \eta \sigma}{\mu - r}, \frac{\mu - r}{\eta \sigma} \left(\frac{1}{\alpha} + \frac{1}{\gamma} e^{-r(T-s)} \right) \right\},    \label{condition 2}
        \end{equation}
        then the equilibrium price is:   
        \begin{equation}
            \theta^{\ast}_s = \frac{ \gamma \left[1 - (\rho + \phi)^2\right] \eta^2 + (\rho + \phi) \frac{\mu - r}{\sigma} \eta e^{-r(T-s)} }{\frac{l}{\alpha} \gamma \left[1 - (\rho + \phi)^2\right] + le^{-r(T-s)} }. \notag     
        \end{equation}
        At this price, the CARA insurer optimally chooses:
        \begin{equation}
            x^{\ast}_s(\theta^{\ast}_s) = \frac{ \left[ 1 - (\rho + \phi) \frac{\mu - r}{\alpha \eta \sigma} \right] e^{-r(T-s)} }{ \frac{1}{\alpha} \gamma \left[1 - (\rho + \phi)^2\right] + e^{-r(T-s)} }, \quad y^{\ast}_s(\theta^{\ast}_s) = \frac{ \left[ \frac{1}{\alpha} + \frac{1}{\gamma} e^{-r(T-s)} - (\rho + \phi) \frac{\eta \sigma}{\mu - r} \right] \frac{\mu - r}{\sigma^2} e^{-r(T-s)} }{ \frac{1}{\alpha} \gamma \left[1 - (\rho + \phi)^2\right] + e^{-r(T-s)}  }. \notag 
        \end{equation}

        \item \textbf{Lower-Bound Correlation-Distorted Equilibrium.} If: 
        \begin{equation}
            \frac{\mu - r}{\eta \sigma} \left( \frac{1}{\alpha} + \frac{1}{\gamma} e^{-r(T-s)} \right) < \rho - \phi < \frac{\alpha \eta \sigma}{\mu - r},    \label{condition 3}
        \end{equation}
        then the equilibrium price is:  
        \begin{equation}
            \theta^{\ast}_s = \frac{ \gamma \left[1 - (\rho - \phi)^2\right] \eta^2 + (\rho - \phi) \frac{\mu - r}{\sigma} \eta e^{-r(T-s)} }{\frac{l}{\alpha} \gamma \left[1 - (\rho - \phi)^2\right] + le^{-r(T-s)} }. \notag     
        \end{equation}
        At this price, the CARA insurer optimally chooses:
        \begin{equation}
            x^{\ast}_s(\theta^{\ast}_s) = \frac{ \left[ 1 - (\rho - \phi) \frac{\mu - r}{\alpha \eta \sigma} \right] e^{-r(T-s)} }{ \frac{1}{\alpha} \gamma \left[1 - (\rho - \phi)^2\right] + e^{-r(T-s)} }, \quad y^{\ast}_s(\theta^{\ast}_s) = \frac{ \left[ \frac{1}{\alpha} + \frac{1}{\gamma} e^{-r(T-s)} - (\rho - \phi) \frac{\eta \sigma}{\mu - r} \right] \frac{\mu - r}{\sigma^2} e^{-r(T-s)} }{ \frac{1}{\alpha} \gamma \left[1 - (\rho - \phi)^2\right] + e^{-r(T-s)}  }. \notag 
        \end{equation}

        \item \textbf{Pure Underwriting Equilibrium.} If:  
        \begin{equation}
            \rho - \phi \leq \frac{\mu - r}{\eta \sigma} \left( \frac{1}{\alpha} + \frac{1}{\gamma} e^{-r(T-s)} \right) \leq \rho + \phi, \label{condition 4}
        \end{equation}
        then the equilibrium price is:
        \begin{equation}
            \theta^{\ast}_s = \frac{\eta^2}{ \left( \frac{1}{\alpha} + \frac{1}{\gamma} e^{-r(T-s)}\right) l }. \notag
        \end{equation}
        At this price, the CARA insurer optimally chooses:
        \begin{equation}
            x^{\ast}_s(\theta^{\ast}_s) = \frac{\frac{1}{\gamma} e^{-r(T-s)}}{\frac{1}{\alpha} + \frac{1}{\gamma} e^{-r(T-s)}}, \quad y^{\ast}_s(\theta^{\ast}_s) = 0. \notag
        \end{equation}

        \item \textbf{Market Failure.} If:
        \begin{equation}
             \rho + \phi < \min\left\{\frac{\mu - r}{2 \eta \sigma} \frac{1}{\gamma} e^{-r(T-s)} - \sqrt{ \left(\frac{\mu - r}{2 \eta \sigma} \frac{1}{\gamma} \right)^2 e^{-2r(T-s)} + 1 }, \frac{\alpha \eta \sigma}{\mu - r}, \frac{\mu - r}{\eta \sigma} \left(\frac{1}{\alpha} + \frac{1}{\gamma} e^{-r(T-s)} \right) \right\},    \label{condition 5}
        \end{equation}     
        then the market does not admit an equilibrium with an admissible insurance price. 
    \end{enumerate} 
\end{theorem}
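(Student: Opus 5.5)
The plan is to solve the market-clearing equation $x^{\ast}_s(\theta)=d(\theta)$ separately on each of the four supply regimes of Proposition \ref{Proposition CARA Optimal Strategy}. In each regime, we then check whether the solution $\theta$ actually lies in the parameter region that defines that regime, and whether it lies in $[0,\overline{\theta}]$. Throughout, we write $k=\frac{1}{\gamma}e^{-r(T-s)}$ and $a=\frac{\mu-r}{\eta\sigma}$, so that $\psi_s=\frac{\theta_s l}{\eta^2}\cdot\frac{1}{a}$. Demand is then $d(\theta)=1-\frac{l\theta}{\alpha\eta^2}$, which is strictly decreasing from $1$ at $\theta=0$ to $0$ at $\theta=\overline{\theta}$. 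Every regime condition of Proposition \ref{Proposition CARA Optimal Strategy} is expressed through $\psi_s$, and hence through $\theta$. The work is therefore to translate each regime condition, evaluated at the candidate equilibrium price, into conditions on $\rho\pm\phi$ alone.

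\textbf{Regimes 1 and 4.} For the zero-underwriting regime, supply is $x=0$, so clearing forces $d(\theta)=0$, i.e.\ $\theta=\overline{\theta}$. At this price $\psi_s=\frac{\alpha\eta\sigma}{\mu-r}$, and the regime condition $\phi\ge\psi_s-\rho$ becomes exactly \eqref{condition 1}. The investment position is read off \eqref{zero underwriting}. For the pure-underwriting regime, supply is $x=\frac{\theta l}{\eta^2}k$. Setting this equal to $d(\theta)$ gives linear equations whose solutions are $\theta=\frac{\eta^2}{(1/\alpha+k)l}$ and $x=\frac{k}{1/\alpha+k}$. Substituting, we get $\psi_s^{-1}=a(1/\alpha+k)$, so the two inequalities of \eqref{zero investment} become \eqref{condition 4}. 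We also note that this price lies in $(0,\overline{\theta})$ automatically.

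\textbf{Regimes 2 and 3.} For the correlation-distorted regimes, write $c=\rho\pm\phi$. Supply is affine in $\theta$: $x=\frac{k(\theta l/\eta^2-ca)}{1-c^2}$. Equating it to $d(\theta)$ gives a linear equation in $\theta$, solved by the stated $\theta^{\ast}_s$; back-substitution gives the stated $x^{\ast}_s,y^{\ast}_s$. The substantive step is then to express the regime conditions at $\theta^{\ast}_s$. Since $x^{\ast}_s\ge0$ is equivalent to $\psi_s\ge c$, the condition $\phi<\psi_s-\rho$ (respectively $\phi>\rho-\psi_s$) becomes $x^{\ast}>0$, i.e.\ $c<\frac{\alpha\eta\sigma}{\mu-r}$. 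Since the positivity of $y$ is governed by $\psi_s^{-1}$ versus $c$, the condition $\phi<\psi_s^{-1}-\rho$ (respectively $\phi<\rho-\psi_s^{-1}$) becomes the sign of the numerator of $y^{\ast}_s$. That is, $c$ is less than (respectively greater than) $a(1/\alpha+k)$. This yields the upper bounds in \eqref{condition 2} and the two-sided condition \eqref{condition 3}.

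The remaining requirement is admissibility of the price, $\theta^{\ast}_s\ge 0$; the bound $\theta\le\overline{\theta}$ is equivalent to $x\ge0$ and is already handled. The denominator of $\theta^{\ast}_s$ is positive, so we need $\gamma(1-c^2)\eta+ca\eta\,e^{-r(T-s)}\ge0$, i.e.\ $c^2-cak-1\le0$. The roots of this quadratic are $\frac{ak}{2}\pm\sqrt{(ak/2)^2+1}$. The upper root exceeds $1$, so only the lower root binds, giving the left inequality of \eqref{condition 2}. In regime 3 we have $c>0$, so positivity is automatic, which is why no such bound appears in \eqref{condition 3}. Finally, for market failure we will argue that when \eqref{condition 5} holds, regimes 1, 2 and 4 are excluded directly. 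Regime 3 is excluded because \eqref{condition 5} makes $\rho-\phi\le\rho+\phi<a(1/\alpha+k)$, which contradicts the left side of \eqref{condition 3}. Regime 2 fails only because its candidate price is negative, and no other regime can clear the market, so there is no equilibrium.

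The main obstacle I expect is this last bookkeeping: making sure the translated conditions in $c$ match the stated inequalities exactly, including strict versus weak inequalities at the boundaries. A related issue is that the Proposition \ref{Proposition CARA Optimal Strategy} regimes, once mapped through the equilibrium price, still partition the parameter space. Here I would rely on the exhaustiveness argument of Proposition \ref{verification theorem}, together with the monotonicity of $d$, to rule out multiple clearing prices across regimes.
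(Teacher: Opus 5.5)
Your proposal is correct and follows the same route as the paper: solve $x^{\ast}_s(\theta)=d(\theta)$ within each supply regime of Proposition~\ref{Proposition CARA Optimal Strategy}, translate that regime's conditions at the candidate price into conditions on $\rho\pm\phi$, and use non-negativity of the upper-bound-regime price to carve out market failure (only the lower root binds because $\rho+\phi<1$); your version makes explicit the translations the paper leaves as ``it can be verified.'' One small fix to your closing remark: monotonicity of $d$ alone does not rule out clearing prices in two different regimes, but uniqueness does hold because $x^{\ast}_s(\theta)$ is continuous and nondecreasing in $\theta$ across the regimes (alternatively, because the four regime conditions are mutually exclusive when $|\rho\pm\phi|<1$), a point the paper's proof does not address either.
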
 

\noindent \textbf{Proof}. Based on Proposition \ref{Proposition CARA Optimal Strategy} and the market clearing condition $x^{\ast}_s(\theta^{\ast}_s) = d(\theta^{\ast}_s)$, we solve explicitly for the equilibrium price $\theta^{\ast}_s$ under each candidate regime. A solution is admissible only if the resulting price lies within the regulatory bounds $[\underline{\theta}, \overline{\theta}]$. It can be verified that $\theta^{\ast}_s > 0$ in all regimes except potentially the upper-bound correlation-distorted equilibrium. In that case, non-negativity of the equilibrium price additionally requires: 
\begin{equation}
    \gamma \left[1 - (\rho + \phi)^2 \right] \eta^2 + (\rho + \phi) \frac{\mu - r}{\sigma} \eta e^{-r(T-s)} \geq 0, \notag
\end{equation}
which is equivalent to: 
\begin{equation}
    \frac{\mu - r}{2 \eta \sigma} \frac{1}{\gamma} e^{-r(T-s)} - \sqrt{ \left(\frac{\mu - r}{2 \eta \sigma} \frac{1}{\gamma} \right)^2 e^{-2r(T-s)} + 1 } \leq \rho + \phi \leq \frac{\mu - r}{2 \eta \sigma} \frac{1}{\gamma} e^{-r(T-s)} + \sqrt{ \left(\frac{\mu - r}{2 \eta \sigma} \frac{1}{\gamma} \right)^2 e^{-2r(T-s)} + 1 }. \notag
\end{equation}
Since $\rho + \phi < 1$ by assumption, the upper bound never binds. Therefore, whenever $\rho + \phi < \min\left\{ \frac{\alpha \eta \sigma}{\mu - r}, \frac{\mu - r}{\eta \sigma} \left(\frac{1}{\alpha} + \frac{1}{\gamma} e^{-r(T-s)} \right) \right\}$ and $\rho + \phi < \frac{\mu - r}{2 \eta \sigma} \frac{1}{\gamma} e^{-r(T-s)} - \sqrt{ \left(\frac{\mu - r}{2 \eta \sigma} \frac{1}{\gamma} \right)^2 e^{-2r(T-s)} + 1 }$, $\theta_s^{\ast} < 0$ and hence inadmissible, leading to market failure. \hfill $\square$ 

\begin{remark}
    In this paper, we restrict the lower bound of the insurance price to be $\underline{\theta} = 0$ and exclude the possibility of negative premiums. This restriction has two main advantages. First, given our normalization that the total amount of risk is one unit and is shared between the insurer and the insuree, imposing $\theta \ge 0$ prevents economically implausible outcomes such as excessive insurance demand or excessive underwriting supply, that is, $d(\theta) = x(\theta) > 1$. Second, the non-negativity constraint on $\theta$ substantially simplifies the equilibrium analysis. When solving the HJBI equation and conducting the case-by-case classification based on the Kuhn-Tucker conditions, allowing for $\theta < 0$ would lead to sign reversals in certain inequalities, thereby complicating the characterization of optimal policies and equilibrium regimes. Since the primary focus of this paper is not on negative insurance prices or deliberate underwriting losses by insurers, we believe that restricting attention to $\theta \ge 0$ does not affect the core economic insights of the model. 
\end{remark}

\subsection{Properties of the Equilibrium}

Theoretically, the market equilibrium can take five distinct regimes. Moreover, since the conditions characterizing each regime generally depend on time, a given economy may experience transitions across regimes over the course of the planning horizon. The following two propositions provide necessary conditions for the occurrence of several regimes.

\begin{proposition}
    When $\phi = 0$, the equilibrium characterized in Theorem \ref{Theorem Market Equilibrium} degenerates to the benchmark case in which insurers do not have ambiguity concerns. In this case, the equilibrium outcomes described in cases (2)-(4) of Theorem \ref{Theorem Market Equilibrium} are consistent. Consequently, the insurance market admits only three types of equilibria: a positive underwriting equilibrium, a zero underwriting equilibrium, and market failure.
\end{proposition}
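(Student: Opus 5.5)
We substitute $\phi=0$ directly into Theorem \ref{Theorem Market Equilibrium} and check two things. First, the three middle regimes carry the same formula. Second, their parameter conditions collapse into a single region.

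\emph{Step 1: the regime conditions.} With $\phi=0$ we have $\rho+\phi=\rho-\phi=\rho$. Write $c_s \triangleq \frac{\mu-r}{\eta\sigma}\left(\frac{1}{\alpha}+\frac{1}{\gamma}e^{-r(T-s)}\right)$, $a \triangleq \frac{\alpha\eta\sigma}{\mu-r}$, and let $b_s$ be the lower root appearing in \eqref{condition 2}. The conditions then become:
\begin{itemize}
\item \eqref{condition 1}: $\rho\ge a$;
\item \eqref{condition 2}: $b_s\le\rho<\min\{a,c_s\}$;
\item \eqref{condition 3}: $c_s<\rho<a$;
\item \eqref{condition 4}: $\rho=c_s$, since it requires $\rho\le c_s\le\rho$;
\item \eqref{condition 5}: $\rho<\min\{b_s,a,c_s\}$.
\end{itemize}
Hence cases (2)--(4) jointly cover exactly the set $\{b_s\le\rho<a\}$, split according to whether $\rho<c_s$, $\rho>c_s$ or $\rho=c_s$. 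One detail needs checking: when $\rho\ge c_s$, the bound $b_s\le\rho$ holds automatically. This is because $b_s<0<c_s$, so no part of the region is lost. The only remaining possibilities are $\rho\ge a$ (zero underwriting) and the failure region.

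\emph{Step 2: the formulas coincide.} Cases (2) and (3) at $\phi=0$ give literally the same expressions for $\theta^\ast_s$, $x^\ast_s$ and $y^\ast_s$, with $\rho$ in place of $\rho\pm\phi$. It remains to show that at $\rho=c_s$ this common formula reduces to the pure underwriting one.

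For the price, I would use the equivalent market-clearing characterization rather than expanding the formula directly. The case-2 price solves $x(\theta)=d(\theta)$ with the case-2 supply. At $\rho=c_s$ the investment numerator $\frac{1}{\alpha}+\frac{1}{\gamma}e^{-r(T-s)}-\rho\frac{\eta\sigma}{\mu-r}$ vanishes, so $y^\ast_s=0$. The first FOC then gives $x=\frac{\theta l}{\gamma\eta^2}e^{-r(T-s)}$, which is exactly the case-4 supply. Consequently the clearing price and the underwriting amount agree with case (4). As a sanity check, one can also verify directly that
\[
\left(1-\rho\frac{\mu-r}{\alpha\eta\sigma}\right)\left(\frac{1}{\alpha}+\frac{1}{\gamma}e^{-r(T-s)}\right)=\frac{1}{\gamma}\left[\frac{1}{\alpha}\gamma(1-\rho^2)+e^{-r(T-s)}\right]
\]
when $\rho=c_s$, using $\rho\frac{\mu-r}{\eta\sigma}=c_s\frac{\mu-r}{\eta\sigma}$. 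I expect this algebraic identity to be the only computational obstacle, and the $y=0$ argument sidesteps it.

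\emph{Step 3: naming the regimes.} With $\phi=0$ the ambiguity set is $\{0\}$, so $\xi^\ast=0$ and the distinction between upper-bound, lower-bound and interior distortion disappears. Cases (2)--(4) therefore merge into one positive underwriting equilibrium, with $x^\ast_s>0$ because $\rho<a$. Together with case (1) and market failure, this gives exactly three regimes, which is the claimed degeneration to the no-ambiguity benchmark.
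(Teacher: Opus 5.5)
Your proof is correct, but it follows a different route from the paper's.

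\textbf{The paper's route.} The paper argues structurally and by citation. With $\phi=0$ the ambiguity set collapses to $\{0\}$, so the HJBI equation becomes the ordinary HJB equation of the model without ambiguity. The paper then notes that the regime classification and controls of Propositions~\ref{verification theorem} and~\ref{Proposition CARA Optimal Strategy} coincide with Propositions~1--2 of \citet{chen2025dynamic}.

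\textbf{Your route.} You work entirely inside Theorem~\ref{Theorem Market Equilibrium}. You show three things at $\phi=0$:
\begin{enumerate}
\item conditions (2)--(4) tile the single interval $\{b_s\le\rho<a\}$ according to where $\rho$ sits relative to $c_s$;
\item cases (2) and (3) become literally the same formulas;
\item at the junction $\rho=c_s$ the common formula reduces to case (4), because $y^\ast_s$ vanishes there and the first FOC then returns the case-(4) supply, whose linear clearing equation has a unique root.
\end{enumerate}

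\textbf{What each route buys.} The citation identifies the degenerate object with the benchmark. It also implicitly assures that the classification remains valid at $\phi=0$, even though its Kuhn--Tucker derivation assumed $\phi>0$. You take this for granted, though it is easy to confirm: with $\xi\equiv 0$ the problem is a mean--variance problem with the single constraint $x\ge 0$. Your route gives a self-contained verification of exactly the ``consistency'' claim in the statement, including continuity at $\rho=c_s$. The paper's proof never checks this explicitly.

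\textbf{A small missing check in Step 1.} Condition (4) at $\phi=0$ reads $\rho=c_s$, with no explicit requirement $\rho<a$. To place it inside $\{b_s\le\rho<a\}$, and keep it disjoint from case (1), you need $c_s<a$ whenever $\rho=c_s$ is admissible. This follows from $|\rho|<1$: $c_s<1$ forces $\frac{\mu-r}{\alpha\eta\sigma}<c_s<1$, hence $a>1>c_s$.
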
 

\noindent \textbf{Proof}. Setting $\phi = 0$ eliminates correlation ambiguity and collapses the ambiguity set to a singleton. Substituting $\phi = 0$ into Propositions \ref{verification theorem} and \ref{Proposition CARA Optimal Strategy}, the regime classification and the corresponding optimal controls coincide with those characterized in Propositions 1 and 2 of \citet{chen2025dynamic}, which analyzes the same theoretical framework in the absence of correlation ambiguity. \hfill $\square$

\begin{proposition}
    A necessary condition for the occurrence of the lower-bound correlation-distorted equilibrium is $\rho > 0$ and $\alpha > \frac{\mu - r}{\eta \sigma}$. That is, the insurance and financial markets must be positively correlated, and the insuree must exhibit sufficiently strong risk aversion to sustain their insurance demand. And a necessary condition for the occurrence of the market failure is $\rho < 0$. That is, the insurance and financial markets must be negatively correlated. In particular, if $\rho = 0$, then only cases (1), (2), and (4) in Theorem \ref{Theorem Market Equilibrium} can arise. 
\end{proposition}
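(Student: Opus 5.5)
The plan is to read the necessary conditions directly off the regime inequalities in Theorem \ref{Theorem Market Equilibrium}. Throughout we use $\phi \ge 0$ and the standing bounds $|\rho \pm \phi| < 1$. We abbreviate $a = \frac{\mu - r}{\eta \sigma} > 0$ and $k_s = \frac{1}{\alpha} + \frac{1}{\gamma} e^{-r(T-s)} > \frac{1}{\alpha}$.

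For the lower-bound correlation-distorted equilibrium, condition \eqref{condition 3} gives $a k_s < \rho - \phi < \frac{\alpha \eta \sigma}{\mu - r} = \frac{\alpha}{a}$.
\begin{itemize}
\item \emph{Positive correlation.} Since $a k_s > 0$ and $\phi \ge 0$, we get $\rho \ge \rho - \phi > a k_s > 0$.
\item \emph{Risk aversion.} Chaining the outer inequalities gives $a k_s < \alpha / a$, i.e. $a^2 k_s < \alpha$. Using $k_s > 1/\alpha$, this yields $a^2/\alpha < \alpha$, i.e. $\alpha^2 > a^2$. Since both $\alpha$ and $a$ are positive, $\alpha > a = \frac{\mu - r}{\eta \sigma}$.
\end{itemize}
This is routine algebra, requiring only that the interval be nonempty.

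For market failure, condition \eqref{condition 5} in particular requires
\[
\rho + \phi < \frac{a}{2\gamma} e^{-r(T-s)} - \sqrt{\left(\frac{a}{2\gamma}\right)^2 e^{-2r(T-s)} + 1}.
\]
Write $c = \frac{a}{2\gamma} e^{-r(T-s)} > 0$. Since $\sqrt{c^2+1} > c$, the right-hand side $c - \sqrt{c^2+1}$ is strictly negative. Hence $\rho \le \rho + \phi < 0$.

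For $\rho = 0$, the two conclusions above rule out case (3), since it needs $\rho > 0$, and case (5), since it needs $\rho < 0$. It remains only to note that cases (1), (2), (4) are not excluded by sign considerations. No existence claim is needed beyond saying the remaining possibilities are among these.

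The step most likely to need care is the last one. The phrase ``only cases (1), (2), and (4) can arise'' implicitly relies on the five regimes of Theorem \ref{Theorem Market Equilibrium} exhausting the parameter space. I would invoke the exhaustiveness established in the proof of Proposition \ref{verification theorem}, together with the equilibrium construction, rather than re-prove it.

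A second subtlety concerns the quantity under the square root in \eqref{condition 5}. As printed, it uses $e^{-2r(T-s)}$ inside the square and is consistent with $c^2$. I would check this against the quadratic $\gamma(1-z^2)\eta^2 + z a \eta^2 e^{-r(T-s)} \ge 0$ from the theorem's proof. Either way, the negativity of the lower root holds, because the product of the roots is $-1 < 0$. I would cite this as the cleanest argument: the quadratic $\gamma z^2 - a e^{-r(T-s)} z - \gamma = 0$ has roots of opposite signs, so its smaller root is negative.
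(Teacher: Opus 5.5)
Your proposal is correct and takes the same route as the paper, whose proof simply reads the necessary conditions off \eqref{condition 3} and \eqref{condition 5}. Your algebra fills in exactly the details the paper leaves implicit: positivity of $\rho$ from $\rho \ge \rho-\phi > \frac{\mu-r}{\eta\sigma}(\frac{1}{\alpha}+\frac{1}{\gamma}e^{-r(T-s)}) > 0$, the bound $\alpha^2 > (\frac{\mu-r}{\eta\sigma})^2$ from nonemptiness of the interval, and negativity of the smaller root because the product of the roots is $-1$.

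One minor point concerns the $\rho = 0$ claim. It only requires excluding cases (3) and (5) among the regimes listed in Theorem \ref{Theorem Market Equilibrium}, so no exhaustiveness argument is needed. In any case, the exhaustiveness shown in the proof of Proposition \ref{verification theorem} is about the insurer's best response at a fixed price, not about the equilibrium regimes.
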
 

\noindent \textbf{Proof}. The result follows directly from the equilibrium conditions \eqref{condition 3} and \eqref{condition 5}. \hfill $\square$ 

The equilibrium regime can be directly identified from insurers’ underwriting and investment behavior, as summarized in the following proposition.

\begin{proposition}
    Given the existence of an equilibrium, if the insurer’s financial investment position is non-positive, the equilibrium must be either a lower-bound correlation-distorted equilibrium or a pure underwriting equilibrium. If the insurer’s investment position is positive, then zero underwriting corresponds to a zero underwriting equilibrium, whereas strictly positive underwriting corresponds to an upper-bound correlation-distorted equilibrium.
\end{proposition}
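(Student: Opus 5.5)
The plan is to go through the four admissible regimes of Theorem \ref{Theorem Market Equilibrium} (market failure is excluded by the existence assumption). In each regime I will read off the sign of $y^{\ast}_s(\theta^{\ast}_s)$ and of $x^{\ast}_s(\theta^{\ast}_s)$ from the regime's defining condition. Since the regimes are mutually exclusive, this gives a map from each regime to a sign pattern of $(x^{\ast}_s,y^{\ast}_s)$. The proposition is then the contrapositive reading of that map. Throughout, write $A_s \triangleq \frac{\mu - r}{\eta \sigma}\left(\frac{1}{\alpha} + \frac{1}{\gamma} e^{-r(T-s)}\right)$ and $D^{\pm}_s \triangleq \frac{1}{\alpha}\gamma[1-(\rho\pm\phi)^2] + e^{-r(T-s)}$. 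Note that $D^{\pm}_s>0$ because $|\rho\pm\phi|<1$.

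The first two regimes are immediate from the explicit formulas. In the zero underwriting regime, $x^{\ast}_s=0$ and $y^{\ast}_s = \frac{\mu-r}{\gamma\sigma^2}e^{-r(T-s)}>0$ because $\mu>r$. In the pure underwriting regime, $y^{\ast}_s=0$ and $x^{\ast}_s>0$.

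For the two correlation-distorted regimes, write the numerator of $y^{\ast}_s$ in Theorem \ref{Theorem Market Equilibrium} as
\begin{equation}
\frac{\mu-r}{\sigma^2}e^{-r(T-s)}\cdot\frac{\eta\sigma}{\mu-r}\left(A_s-(\rho\pm\phi)\right). \notag
\end{equation}
Since the denominator $D^{\pm}_s$ is positive, $y^{\ast}_s$ has the sign of $A_s-(\rho\pm\phi)$.
\begin{itemize}
\item In the upper-bound regime, condition \eqref{condition 2} gives $\rho+\phi<A_s$, so $y^{\ast}_s>0$.
\item In the lower-bound regime, condition \eqref{condition 3} gives $\rho-\phi>A_s$, so $y^{\ast}_s<0$.
\end{itemize}
For underwriting, the numerator of $x^{\ast}_s$ is proportional to $1-(\rho\pm\phi)\frac{\mu-r}{\alpha\eta\sigma}$. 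This is strictly positive because $\rho\pm\phi<\frac{\alpha\eta\sigma}{\mu-r}$ in both \eqref{condition 2} and \eqref{condition 3}. So $x^{\ast}_s>0$ in both distorted regimes.

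Collecting the four cases:
\begin{itemize}
\item A non-positive investment position occurs only in the lower-bound regime ($y^{\ast}_s<0$) or the pure underwriting regime ($y^{\ast}_s=0$).
\item A positive investment position occurs only in the zero underwriting regime, where $x^{\ast}_s=0$, or the upper-bound regime, where $x^{\ast}_s>0$.
\end{itemize}
Within the positive-investment case, the sign of $x^{\ast}_s$ separates the two regimes. This gives both claims of the proposition.

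The main subtlety is boundary overlap between regimes. At $\rho+\phi = A_s$, condition \eqref{condition 4} holds, and the upper-bound formula would give $y^{\ast}_s=0$. The strict inequality in \eqref{condition 2} assigns that point to pure underwriting, so the classification stays consistent. I would check that every weak/strict inequality in \eqref{condition 1}--\eqref{condition 4} assigns such boundary points to the regime whose sign pattern matches. In particular, at $\rho+\phi=\frac{\alpha\eta\sigma}{\mu-r}$ the upper-bound formula would give $x^{\ast}_s=0$, and this point is assigned to zero underwriting by \eqref{condition 1}.
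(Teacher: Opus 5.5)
Your proposal is correct and follows the paper's route. The paper's proof only states that the claim follows from the equilibrium characterization in Theorem~\ref{Theorem Market Equilibrium}; you carry out that regime-by-regime reading of the signs of $x^{\ast}_s$ and $y^{\ast}_s$ explicitly, including the boundary cases. One small precision: the inversion step rests on the four non-failure regimes being exhaustive (which they are under $|\rho\pm\phi|<1$) and having pairwise distinct sign patterns, not on their mutual exclusivity.
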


\noindent \textbf{Proof.} The result follows directly from the characterization of equilibrium outcomes. \hfill $\square$ 

Next, we conduct a comparative statics analysis of the equilibrium outcomes, with a particular focus on how the degree of ambiguity aversion affects insurance pricing and insurers’ underwriting and investment positions. 

\begin{proposition} \label{comparative statics}
    Given $(\rho,\phi)$, the equilibrium exhibits the following comparative statics properties: 
    \begin{enumerate}
        \item In both the zero underwriting equilibrium and the pure underwriting equilibrium, the equilibrium outcomes are independent of the reference correlation $\rho$ and the ambiguity radius $\phi$. 

        \item In the upper-bound correlation-distorted equilibrium, if $\rho + \phi \leq 0$, then: 
        \begin{equation}
            \frac{\partial \theta^{\ast}_s}{\partial (\rho + \phi)} > 0, \quad \frac{\partial x^{\ast}_s}{\partial (\rho + \phi)} < 0, \quad \frac{\partial y^{\ast}_s}{\partial (\rho + \phi)} < 0. \notag
        \end{equation}
        That is, an increase in either the reference correlation or the ambiguity radius would raise the equilibrium insurance price while reducing both underwriting and investment positions. 
        
        If $\rho + \phi > 0$, then the signs of the derivatives depend on parameter thresholds:  
        \begin{align}
            & \frac{\partial \theta^{\ast}_s}{\partial (\rho + \phi)} > (= , < ) \ 0 \ \Longleftrightarrow \ \rho + \phi <  \ (=, >) \ \frac{\frac{\mu - r}{\eta \sigma} \left(\frac{1}{\alpha} + \frac{1}{\gamma} e^{-r(T-s)} \right) - (\rho + \phi)}{1 - \frac{\mu - r}{\alpha \eta \sigma} (\rho + \phi)}, \notag \\
            & \frac{\partial x^{\ast}_s}{\partial (\rho + \phi)} < (= , > ) \ 0 \ \Longleftrightarrow \ \rho + \phi <  \ (=, >) \ \frac{\frac{\mu - r}{\eta \sigma} \left(\frac{1}{\alpha} + \frac{1}{\gamma} e^{-r(T-s)} \right) - (\rho + \phi)}{1 - \frac{\mu - r}{\alpha \eta \sigma} (\rho + \phi)}, \notag \\
            &\frac{\partial y^{\ast}_s}{\partial (\rho + \phi)} < (= , > ) \ 0 \ \Longleftrightarrow \ \rho + \phi <  \ (=, >) \ \frac{ \left(\frac{1}{\alpha} + \frac{1}{\gamma} e^{-r(T-s)} \right) \left[ 1 - \frac{\mu - r}{\alpha \eta \sigma} (\rho + \phi) \right]}{\frac{\mu - r}{\eta \sigma} \left(\frac{1}{\alpha} + \frac{1}{\gamma} e^{-r(T-s)} \right) - (\rho + \phi)}. \notag
        \end{align}

        \item In the lower-bound correlation-distorted equilibrium, we have: 
        \begin{equation}
            \frac{\partial \theta^{\ast}_s}{\partial (\rho - \phi)} < 0, \quad \frac{\partial x^{\ast}_s}{\partial (\rho - \phi)} > 0, \quad \frac{\partial y^{\ast}_s}{\partial (\rho - \phi)} < 0. \notag
        \end{equation} 
        That is, a higher reference correlation or a lower ambiguity radius would reduce the equilibrium insurance price, increase the underwriting amount, and decrease the investment position.
    \end{enumerate}
\end{proposition}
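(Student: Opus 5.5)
The plan is a direct differentiation of the closed-form equilibrium quantities in Theorem \ref{Theorem Market Equilibrium}, treating each regime separately. Part (1) is immediate. In the zero underwriting regime, $\theta^{\ast}_s=\overline{\theta}$, $x^{\ast}_s=0$ and $y^{\ast}_s=\frac{\mu-r}{\gamma\sigma^2}e^{-r(T-s)}$. In the pure underwriting regime, $\theta^{\ast}_s$, $x^{\ast}_s$ and $y^{\ast}_s=0$ involve only $\alpha,\gamma,\eta,l,r,T-s$. So neither $\rho$ nor $\phi$ enters, and only the regime conditions \eqref{condition 1} and \eqref{condition 4} depend on them.

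For parts (2) and (3), both correlation-distorted regimes have the same functional form in a single variable $c$, with $c=\rho+\phi$ or $c=\rho-\phi$. I will introduce the shorthand $a=1/\alpha$, $k=\frac{\mu-r}{\eta\sigma}$ and $E=e^{-r(T-s)}$, and write
\begin{equation}
\theta^{\ast}_s=\frac{\eta^2}{l}\,\frac{N(c)}{D(c)},\qquad N(c)=\gamma(1-c^2)+ckE,\qquad D(c)=a\gamma(1-c^2)+E>0. \notag
\end{equation}
The quotient rule gives $N'D-ND'$. After cancelling the $c^3$ terms and dividing by $E>0$, its sign should equal that of
\begin{equation}
\Delta(c)=\gamma\left[ak(1+c^2)-2c\right]+kE. \notag
\end{equation}
Rearranging $\Delta>0$ gives $c(1-akc)<k(a+E/\gamma)-c$. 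Since both regimes require $c<\frac{\alpha\eta\sigma}{\mu-r}=1/(ak)$, we have $1-akc>0$, so dividing yields exactly the threshold stated in the proposition.

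Market clearing gives $x^{\ast}_s=d(\theta^{\ast}_s)=1-\frac{l}{\alpha\eta^2}\theta^{\ast}_s$. Hence $\partial x^{\ast}_s/\partial c$ has the opposite sign of $\partial\theta^{\ast}_s/\partial c$, and the $x$-statements follow from the $\theta$-statements at no extra cost.

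For $y^{\ast}_s$, I will write it as a positive constant times $(a+E/\gamma-c/k)/D(c)$ and differentiate. This should give a sign-determining expression of the form $-\left[a\gamma(1+c^2)+E\right]+2ack(a\gamma+E)$, up to a positive factor. I will then rearrange it into the form stated in the proposition, dividing by $k(a+E/\gamma)-c$ and tracking whether this quantity is positive or negative.

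The signs then follow. In the upper-bound regime with $c\le 0$, every term of $\Delta$ is positive, so $\theta^{\ast}_s$ is increasing and $x^{\ast}_s$ is decreasing. In the $y$-expression, both terms are nonpositive and the first is strictly negative, so $y^{\ast}_s$ is decreasing. For $c>0$, one simply reads off the thresholds. In the lower-bound regime, \eqref{condition 3} gives $c>k(a+E/\gamma)>0$, so $k(a+E/\gamma)-c<0<c(1-akc)$. Hence $\Delta<0$, so $\theta^{\ast}_s$ decreases in $\rho-\phi$ and $x^{\ast}_s$ increases. For $y^{\ast}_s$, I will bound $2ack(a\gamma+E)$ against $a\gamma(1+c^2)+E$ using $c>k(a+E/\gamma)$ and $akc<1$.

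The main obstacle is the $y^{\ast}_s$ computation. The algebra is error-prone, and in the lower-bound regime the divisor $k(a+E/\gamma)-c$ is negative, which flips inequality directions. I expect the sign claimed in part (3) to hinge on carefully combining both bounds in \eqref{condition 3}, so I will verify that step first, numerically at a sample parameter point, before writing it up.
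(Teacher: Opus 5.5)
Your route is the paper's own: differentiate the closed forms of Theorem~\ref{Theorem Market Equilibrium} in the effective correlation $c$. The paper writes out only the $\theta^{\ast}_s$ sign condition, which is your $\Delta(c)/\gamma$, and calls the $x^{\ast}_s$ and $y^{\ast}_s$ cases ``analogous''. Your $\Delta(c)$, the shortcut $x^{\ast}_s = 1-\frac{l}{\alpha\eta^2}\theta^{\ast}_s$, and your sign expression for $\partial y^{\ast}_s/\partial c$ are all correct.

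\textbf{The step that fails.} You cannot rearrange the $y$-expression into the third threshold of part (2), because that threshold is misstated. Write $b=a+E/\gamma$. Your expression gives
\[
\frac{\partial y^{\ast}_s}{\partial c}<0 \iff 2abkc < b+ac^2 \iff ac\,(kb-c) < b\,(1-akc).
\]
On the upper-bound regime $kb-c>0$, so the correct threshold is $c<\frac{b(1-akc)}{a(kb-c)}$. This is $\alpha$ times the stated one, and the two conditions agree in general only when $\alpha=1$.

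The stated version is actually false. Take $\alpha=2$, $\gamma=1$, $E=\tfrac12$, $k=2$ (so $b=1$) and $c=\tfrac12$. This $c$ satisfies \eqref{condition 2}, whose admissible range here is about $[-0.618,1)$. The stated threshold equals $\tfrac13<c$, which predicts $\partial y^{\ast}_s/\partial c>0$. But here $y^{\ast}_s\propto \frac{1-c/2}{1-c^2/2}$, whose derivative has the sign of $-\tfrac12+c-\tfrac{c^2}{4}$. At $c=\tfrac12$ this equals $-\tfrac1{16}<0$, so $y^{\ast}_s$ is decreasing. The corrected threshold is $\tfrac23>c$, which is consistent with this.

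So you should prove the corrected threshold (with the factor $\alpha$) and flag the misstatement. The numerical sanity check you planned for part (3) belongs here instead.

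Part (3) for $y^{\ast}_s$ is fine, and it closes by combining the two bounds exactly as you anticipated:
\begin{itemize}
\item multiplying $kb<c$ by $ac>0$ gives $abkc<ac^2$;
\item multiplying $akc<1$ by $b>0$ gives $abkc<b$;
\item adding the two yields $2abkc<b+ac^2$, i.e.\ $\partial y^{\ast}_s/\partial(\rho-\phi)<0$.
\end{itemize}
Everything else goes through as you describe: part (1), the $c\le 0$ case of part (2), and the $\theta^{\ast}_s$ and $x^{\ast}_s$ thresholds.
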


\noindent \textbf{Proof}. Consider either the upper-bound or the lower-bound correlation-distorted equilibrium, in which the effective correlation is given by $\rho^{\xi^\ast} \equiv \rho + \xi^\ast$. The first-order condition implies: 
\begin{equation}
    \frac{\partial \theta^{\ast}_s}{\partial \rho^{\xi^{\ast}}} > 0 \Longleftrightarrow \left[ \frac{\mu - r}{\eta \sigma} \left( \frac{1}{\alpha} + \frac{1}{\gamma} e^{-r(T-s)} \right) - \rho^{\xi^{\ast}} \right] + \rho^{\xi^{\ast}} \left[ \frac{\mu - r}{\alpha \eta \sigma} \rho^{\xi^{\ast}} - 1  \right] > 0. \notag
\end{equation}
With condition \eqref{condition 2}, we have $\frac{\mu - r}{\eta \sigma} \left( \frac{1}{\alpha} + \frac{1}{\gamma} e^{-r(T-s)} \right) - \rho^{\xi^{\ast}} > 0$, and $\frac{\mu - r}{\alpha \eta \sigma} \rho^{\xi^{\ast}} - 1  < 0$, which implies that $\frac{\partial \theta^{\ast}_s}{\partial \rho^{\xi^{\ast}}} > 0$ holds when $\rho^{\xi^{\ast}}$ is below the stated threshold. In contrast, with condition \eqref{condition 3}, we have $\frac{\mu - r}{\eta \sigma} \left( \frac{1}{\alpha} + \frac{1}{\gamma} e^{-r(T-s)} \right) - \rho^{\xi^{\ast}} < 0$, and $\frac{\mu - r}{\alpha \eta \sigma} \rho^{\xi^{\ast}} - 1  < 0$, which implies $\frac{\partial \theta^{\ast}_s}{\partial \rho^{\xi^{\ast}}} < 0$. The results for $\frac{\partial x^{\ast}_s}{\partial \rho^{\xi^\ast}}$ and $\frac{\partial y^{\ast}_s}{\partial \rho^{\xi^\ast}}$ can be derived analogously. \hfill $\square$ 

Proposition \ref{comparative statics} shows that stronger ambiguity aversion does not necessarily lead to higher insurance prices or more conservative underwriting and investment behavior, particularly in the upper-bound correlation-distorted equilibrium. In this regime, monotonicity depends on parameter values. By contrast, in the lower-bound correlation-distorted equilibrium, the effects of ambiguity aversion are consistent with conventional intuition. Overall, these results highlight that correlation ambiguity affects insurance pricing in a fundamentally different way from ambiguity regarding loss expectations. The patterns can be observed more clearly in the numerical section. Finally, the following proposition characterizes the comparative statics with respect to the Sharpe ratio.

\begin{proposition}
    Regarding the Sharpe ratio, we have: 
    \begin{enumerate} 
        \item In the zero underwriting equilibrium, while the insurance price and underwriting amount are independent of the Sharpe ratio, the investment position increases with the Sharpe ratio. 

        \item In the upper-bound correlation-distorted equilibrium, we have:
        \begin{equation}
            \frac{\partial \theta_s^{\ast}}{\partial \left(\frac{\mu - r}{\sigma}\right)} > (=, <) \ 0, \  \frac{\partial x_s^{\ast}}{\partial \left(\frac{\mu - r}{\sigma}\right)} < (=, >) \ 0 \ \Longleftrightarrow \ \rho + \phi  > (=, <) \ 0, \quad \text{and} \quad \frac{\partial y_s^{\ast}}{\partial \left(\frac{\mu - r}{\sigma}\right)} > 0. \notag
        \end{equation}
        That is, a higher Sharpe ratio always increases the insurer’s investment position, whereas its effects on insurance price and underwriting position depend on the sign of $\rho + \phi$. 

        \item In the lower-bound correlation-distorted equilibrium, we have: 
        \begin{equation}
             \frac{\partial \theta_s^{\ast}}{\partial \left(\frac{\mu - r}{\sigma}\right)} > 0, \quad  \frac{\partial x_s^{\ast}}{\partial \left(\frac{\mu - r}{\sigma}\right)} < 0, \quad  \frac{\partial y_s^{\ast}}{\partial \left(\frac{\mu - r}{\sigma}\right)} > 0. \notag
        \end{equation}
        Hence, a higher Sharpe ratio always raises the insurance price, reduces underwriting amount, and increases investment position. 
        
        \item In the pure underwriting equilibrium, as there is no investment position, the equilibrium outcomes are independent of the Sharpe ratio. 
    \end{enumerate}
\end{proposition}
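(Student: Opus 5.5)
The plan is to differentiate the explicit closed-form equilibrium quantities in Theorem \ref{Theorem Market Equilibrium}, regime by regime, with respect to the Sharpe ratio. Write $k \triangleq \frac{\mu - r}{\sigma}$, $E \triangleq e^{-r(T-s)} \in (0,1]$, and let $c$ denote the effective correlation $\rho^{\xi^\ast}$, i.e. $c = \rho + \phi$ in the upper-bound regime and $c = \rho - \phi$ in the lower-bound regime. I will vary $k$ through $\mu - r$ while holding $\sigma$ fixed, since $y^\ast_s$ also depends on $\sigma$ separately. Throughout, the statements are pointwise derivatives in the interior of a given regime. The regime conditions \eqref{condition 1}--\eqref{condition 5} themselves depend on $k$, so the claims concern local sensitivities and not movements across regimes; I will state this explicitly.

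Cases (1) and (4) are immediate. In the zero underwriting equilibrium, $\theta^\ast_s = \overline{\theta} = \alpha\eta^2/l$ and $x^\ast_s = 0$ do not involve $k$, while $y^\ast_s = \frac{k}{\gamma\sigma}E$ is strictly increasing in $k$. In the pure underwriting equilibrium, $\theta^\ast_s$, $x^\ast_s$ and $y^\ast_s = 0$ are all free of $\mu$ and $\sigma$.

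For the correlation-distorted regimes, set $A \triangleq \gamma(1 - c^2) > 0$, which holds because $|c| < 1$, and set $D \triangleq \frac{A}{\alpha} + E > 0$. I will rewrite the formulas as
\begin{equation}
\theta^\ast_s = \frac{A\eta^2 + c k \eta E}{l D}, \quad x^\ast_s = \frac{\left(1 - \frac{c k}{\alpha\eta}\right)E}{D}, \quad y^\ast_s = \frac{\left[\left(\frac{1}{\alpha} + \frac{E}{\gamma}\right)k - c\eta\right]E}{\sigma D}. \notag
\end{equation}
Each expression is affine in $k$ with a $k$-free positive denominator, so the derivatives are
\begin{equation}
\frac{\partial \theta^\ast_s}{\partial k} = \frac{c\eta E}{lD}, \quad \frac{\partial x^\ast_s}{\partial k} = -\frac{cE}{\alpha\eta D}, \quad \frac{\partial y^\ast_s}{\partial k} = \frac{\left(\frac{1}{\alpha} + \frac{E}{\gamma}\right)E}{\sigma D} > 0. \notag
\end{equation}
In the upper-bound regime this gives case (2) directly, with the signs of the first two derivatives governed by the sign of $\rho + \phi$ (equality holding when $\rho + \phi = 0$). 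In the lower-bound regime, condition \eqref{condition 3} gives $\rho - \phi > \frac{\mu - r}{\eta\sigma}\left(\frac{1}{\alpha} + \frac{E}{\gamma}\right) > 0$, so $c > 0$, and the three signs in case (3) follow.

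I expect little genuine difficulty. The main point of care is the rewriting of $y^\ast_s$: the factor $\frac{\mu - r}{\sigma^2}$ must be absorbed as $k/\sigma$ so that the term $-c\eta$ is seen to be $k$-free. The second point of care is confirming that $D$ and the denominator of $\theta^\ast_s$ do not depend on $k$.
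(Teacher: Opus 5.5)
Your proposal is correct and follows the same route as the paper, whose proof simply states that the result follows directly from the closed-form equilibrium outcomes in Theorem \ref{Theorem Market Equilibrium}. Your regime-by-regime differentiation makes that computation explicit: each of $\theta^\ast_s$, $x^\ast_s$, $y^\ast_s$ is affine in $k$ over the $k$-free positive denominator $D$, and \eqref{condition 3} forces $\rho-\phi>0$. Your two clarifications, varying $k$ through $\mu-r$ with $\sigma$ fixed (since $y^\ast_s$ also depends on $\sigma$ separately) and reading the claims as local derivatives within a regime, are sensible refinements that the paper leaves implicit.
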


\noindent \textbf{Proof}. The result follows directly from the equilibrium outcomes in Theorem \ref{Theorem Market Equilibrium}. \hfill $\square$ 

\section{Numerical Analysis} \label{Section Numerical}

This section provides numerical illustrations of the equilibrium characterized in Theorem \ref{Theorem Market Equilibrium}. We focus on how correlation ambiguity reshapes insurance pricing, underwriting supply, and investment demand across different correlation environments, and how these effects vary across equilibrium regimes. For the benchmark calibration, we follow the empirical estimates in \citet{luciano2022fluctuations} and set the parameter values as reported in Table \ref{tab benchmark}.

\begin{table}[htbp]
\centering
\setlength{\tabcolsep}{11pt}
\renewcommand{\arraystretch}{1.5}
\caption{Benchmark Parameter Values}
\begin{tabular}{lccccccccc}
\hline
Parameter & $l$ & $\eta$ & $r$ & $\mu$ & $\sigma$ & $\alpha$ & $\gamma$ & $t$ & $T$ \\
\hline
Value     & 1.0 & 0.28   & 0.015 & 0.035 & 0.18 & 2.0 & 2.0 & 0 & 50 \\
\hline
\end{tabular}
\label{tab benchmark}
\end{table}

\subsection{Statistical Calibration of Ambiguity Radius}

Following \citet{cheng2024robust} and \citet{liu2025volatility}, we calibrate the ambiguity radius $\phi$ from the confidence interval of the sample correlation using the Fisher transformation. This procedure maps statistical estimation uncertainty into an economically interpretable ambiguity set for the correlation distortion. Specifically, for a reference correlation $\rho$ estimated from $n$ observations, the Fisher $z$-transform is given by: 
\begin{equation}
    z = \frac{1}{2}\log\left(\frac{1+\rho}{1-\rho}\right), \notag
\end{equation}
and an approximate $(1-\kappa)$ confidence interval is given by $z \pm \frac{z_{1-\kappa/2}}{\sqrt{n-3}}$, which can be mapped back to a correlation interval $[\rho^{-},\rho^{+}]$ via the inverse transform. We then set the ambiguity radius as:
\begin{equation}
    \phi = \max\Big\{|\rho^{+}-\rho|,\ |\rho-\rho^{-}|\Big\}. \notag
\end{equation}

Consistent with \citet{cheng2024robust}, we assume the correlation is estimated using 5-10 years of quarterly observations and set $n=30$. Figure \ref{phi calibration} reports the implied values of $\phi$ across different reference correlations $\rho$ and confidence levels $1-\kappa$. 

\begin{figure}[htbp]
    \centering
    \includegraphics[width=0.6\linewidth]{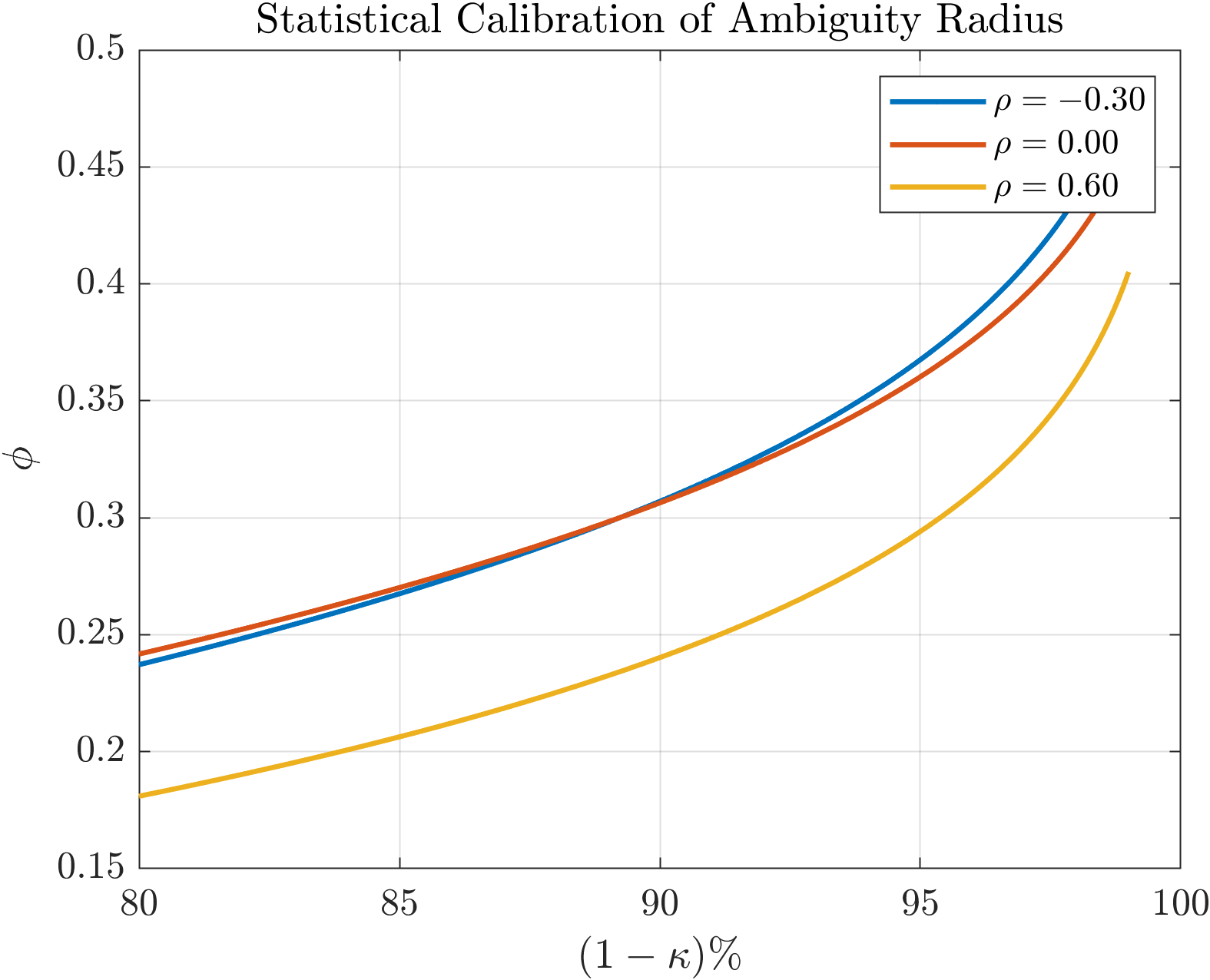}
    \caption{Statistical Calibration of Ambiguity Radius for Different Reference Correlations}
    \label{phi calibration}
\end{figure}

\subsection{Equilibrium Outcomes under Zero Reference Correlation}

We first consider the case with zero reference correlation ($\rho = 0$). Figure \ref{zero correlation} plots the equilibrium insurance price $\theta^{\ast}_s$, underwriting amount $x^{\ast}_s$, and investment position $y^{\ast}_s$ over time. In addition, we report the instantaneous utility gain $p_s$, as defined in Proposition \ref{Proposition CARA Optimal Strategy}, to evaluate how correlation ambiguity affects the CARA insurer’s utility. Each panel displays five curves corresponding to different levels of ambiguity concern. The case $\phi = 0$ serves as the no-ambiguity benchmark, while the remaining curves correspond to increasing confidence levels in the statistical calibration, and hence increasing ambiguity aversion. Vertical dash-dotted lines indicate the time points at which the equilibrium regime switches, as characterized in Theorem \ref{Theorem Market Equilibrium}. 

\begin{figure}[htbp]
    \centering
    \includegraphics[width=0.42\linewidth]{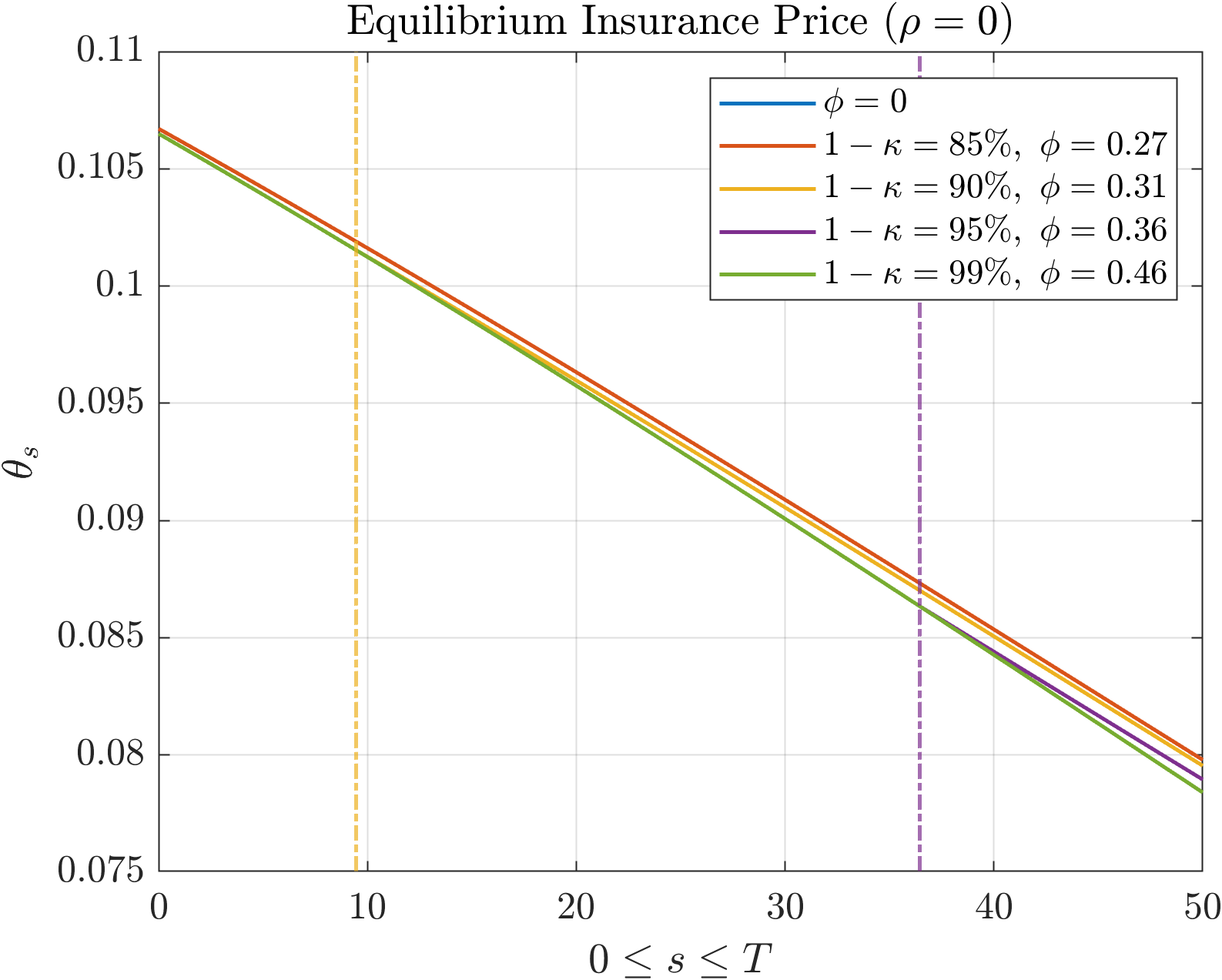}
    \includegraphics[width=0.42\linewidth]{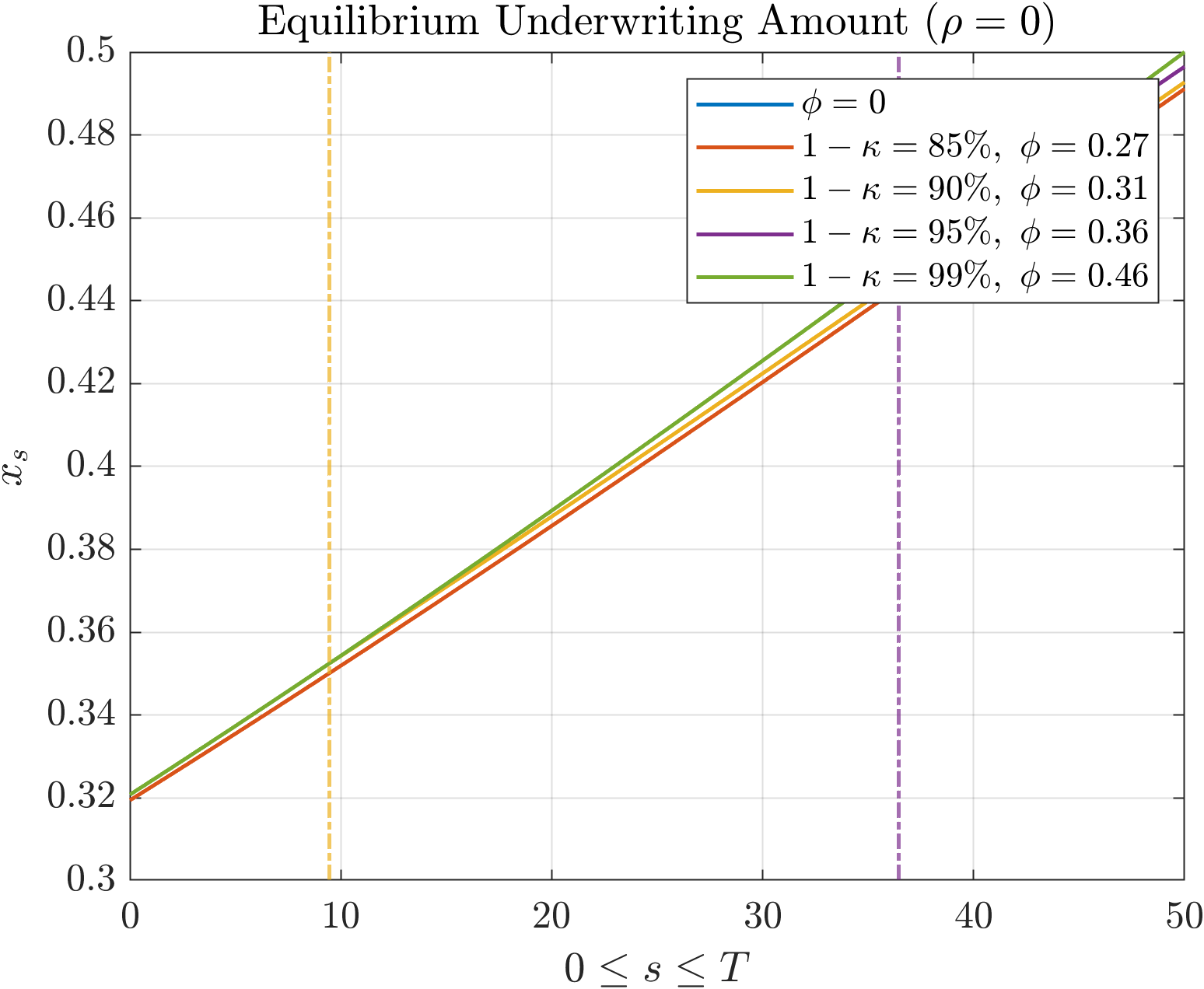}
    \includegraphics[width=0.42\linewidth]{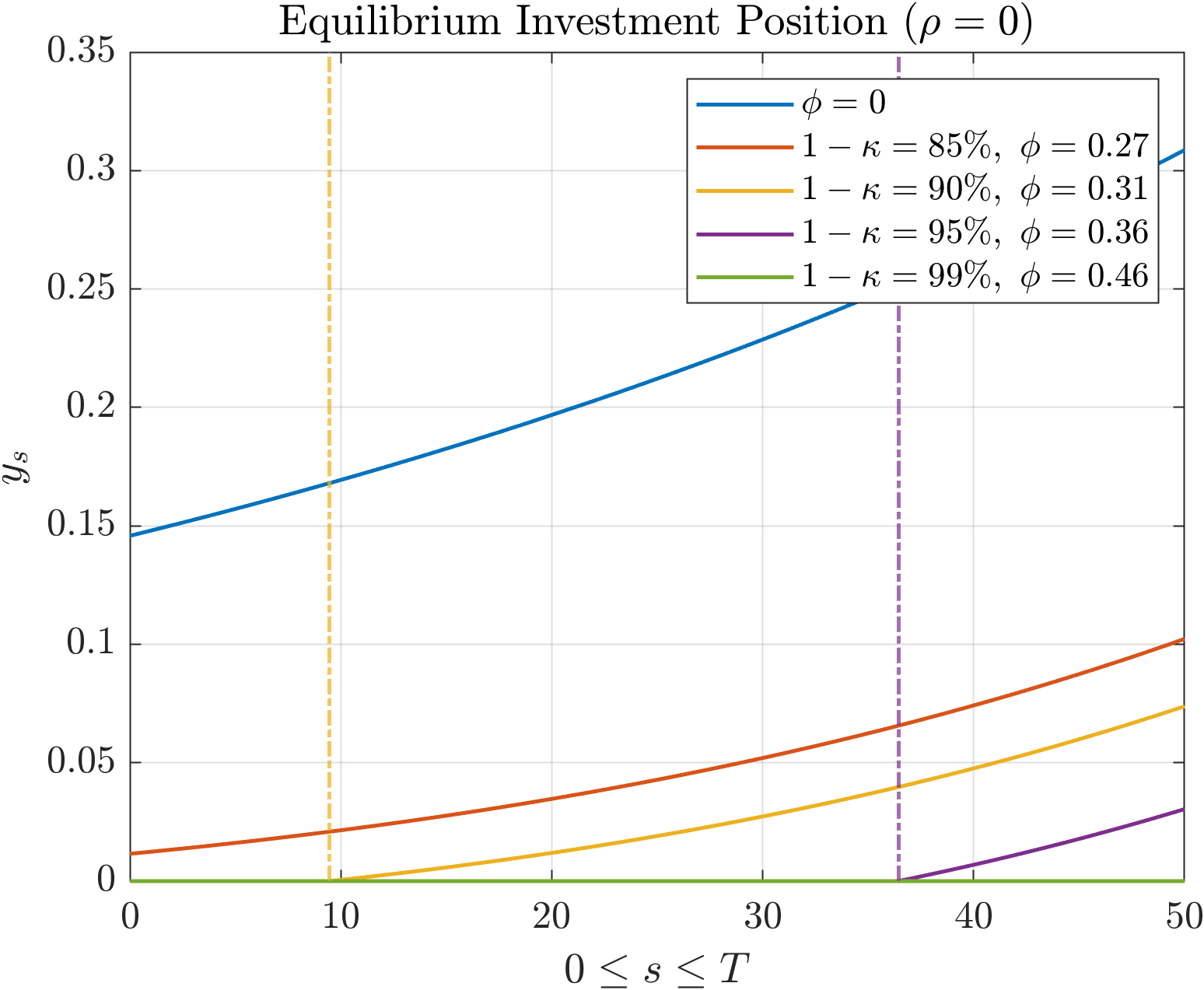}
    \includegraphics[width=0.42\linewidth]{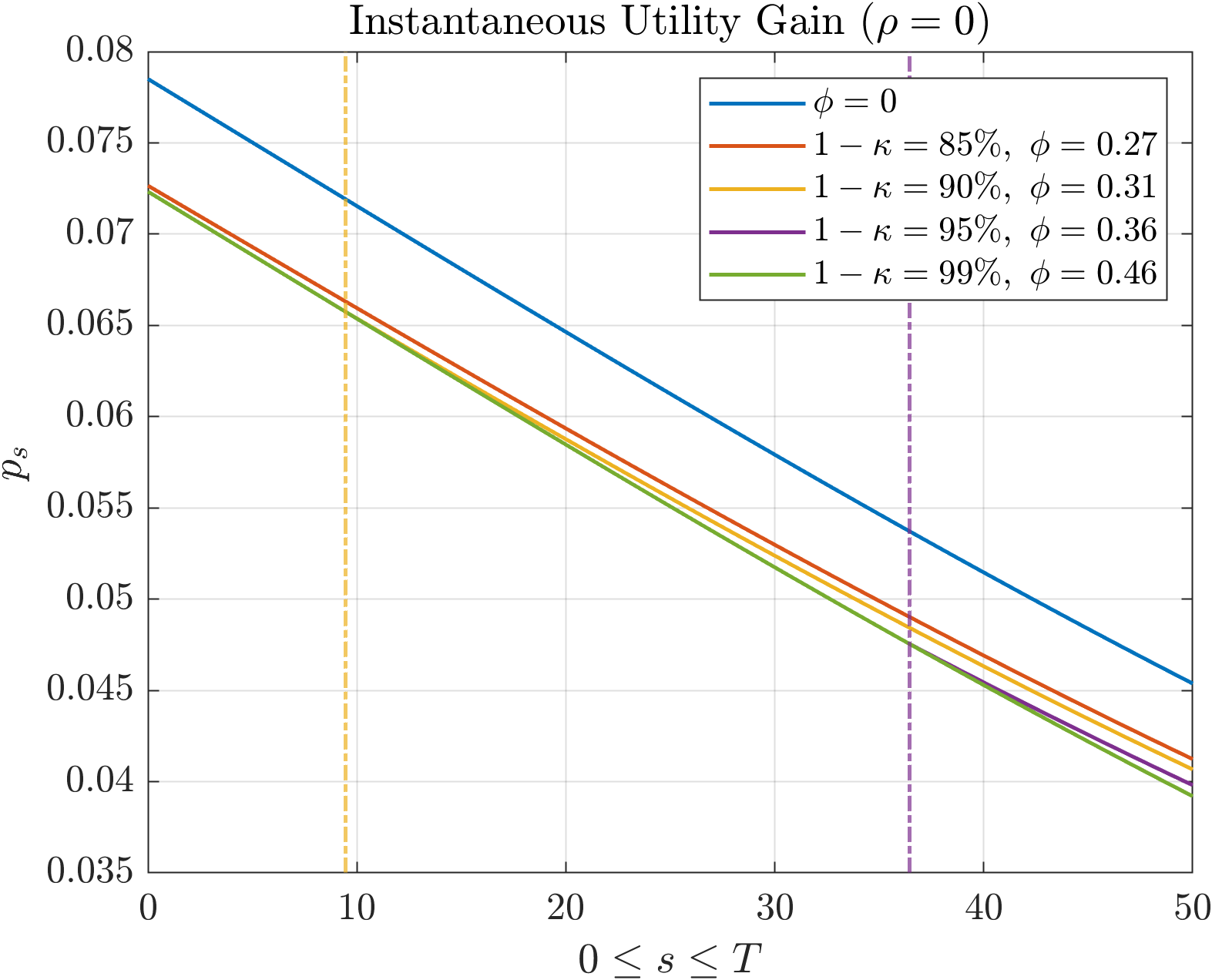}
    \caption{Equilibrium Outcomes under Zero Reference Correlation}
    \label{zero correlation}
\end{figure}

Several curves in Figure \ref{zero correlation} coincide or lie very close to one another. In particular, in the insurance price and underwriting panels, the no-ambiguity benchmark ($\phi = 0$) coincides with the highest ambiguity case ($\phi = 0.46$). For intermediate ambiguity levels ($\phi = 0.31$ and $\phi = 0.36$), the equilibrium paths coincide with the highest-ambiguity curve prior to regime switching and deviate only after the corresponding switching times (around $s = 10$ and $s = 36$). 

Several interesting patterns emerge. First, relative to the no-ambiguity benchmark, concerns about correlation ambiguity do not necessarily raise insurance prices. In particular, the equilibrium price is not monotonic in the degree of ambiguity aversion. Under this parameter setting, a moderate level of ambiguity aversion leads to the highest insurance price, whereas a very high level of ambiguity aversion (corresponding to a 99\% confidence level) results in the same price as in the no-ambiguity case. This non-monotonicity reflects differences in the underlying equilibrium regimes. When $\phi = 0$, the equilibrium falls into a degenerate upper-bound correlation-distorted regime. In contrast, when $\phi = 0.46$, the equilibrium corresponds to a pure underwriting regime with a zero investment position, which pins down the insurance price independently of ambiguity.  

As for the investment position, we find higher levels of ambiguity aversion are associated with a more conservative portfolio choice. For intermediate ambiguity levels ($\phi = 0.31$ and $\phi = 0.36$), the investment position switches from zero to positive over time, marking a transition from the pure underwriting regime to the upper-bound correlation-distorted regime. It is associated with a lower level of underwriting and a higher insurance price. 

Finally, insurer utility exhibits a clear regime-dependent pattern. In the upper-bound correlation-distorted regime, greater ambiguity aversion leads to lower instantaneous utility gains, reflecting the robustness cost of guarding against adverse correlation scenarios. In contrast, in the pure underwriting regime, instantaneous utility gains are identical across ambiguity levels, although they remain strictly below the no-ambiguity benchmark. 

Overall, in terms of quantitative magnitude, the introduction of correlation ambiguity has a more pronounced impact on insurers’ financial investment decisions, while its effects on insurance pricing and underwriting quantities are comparatively more modest.  

\begin{figure}[htbp]
    \centering
    \includegraphics[width=0.42\linewidth]{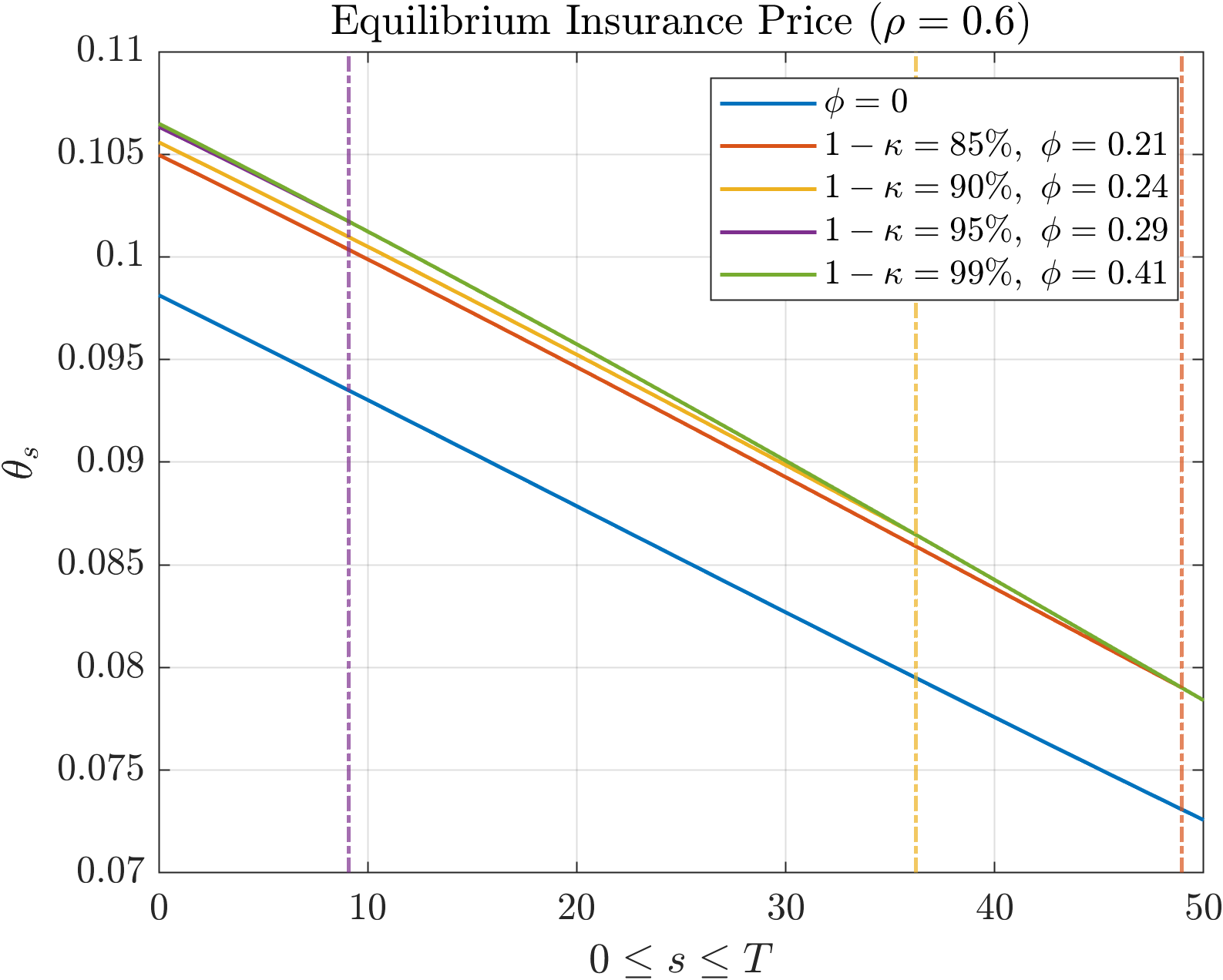}
    \includegraphics[width=0.42\linewidth]{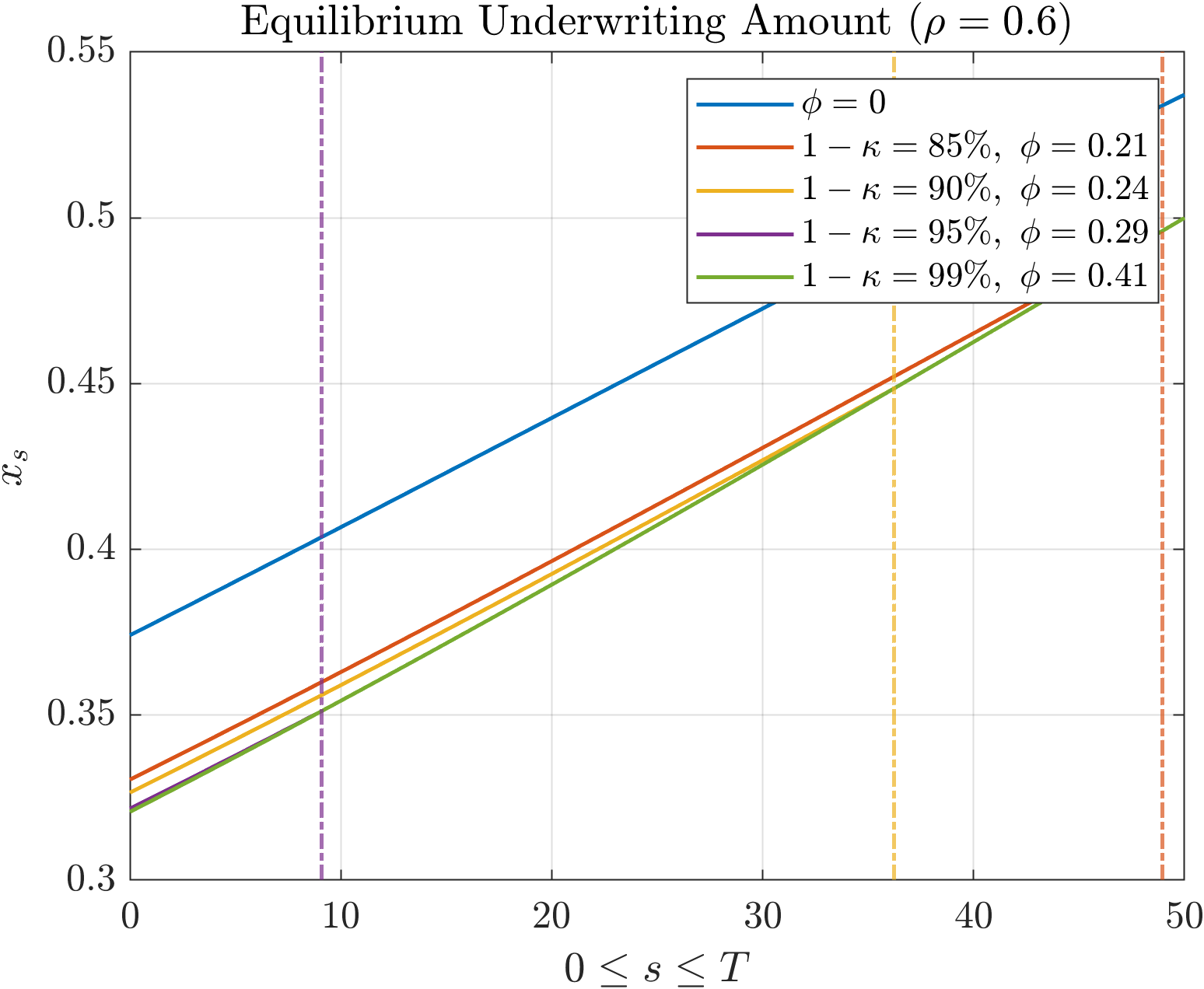}
    \includegraphics[width=0.42\linewidth]{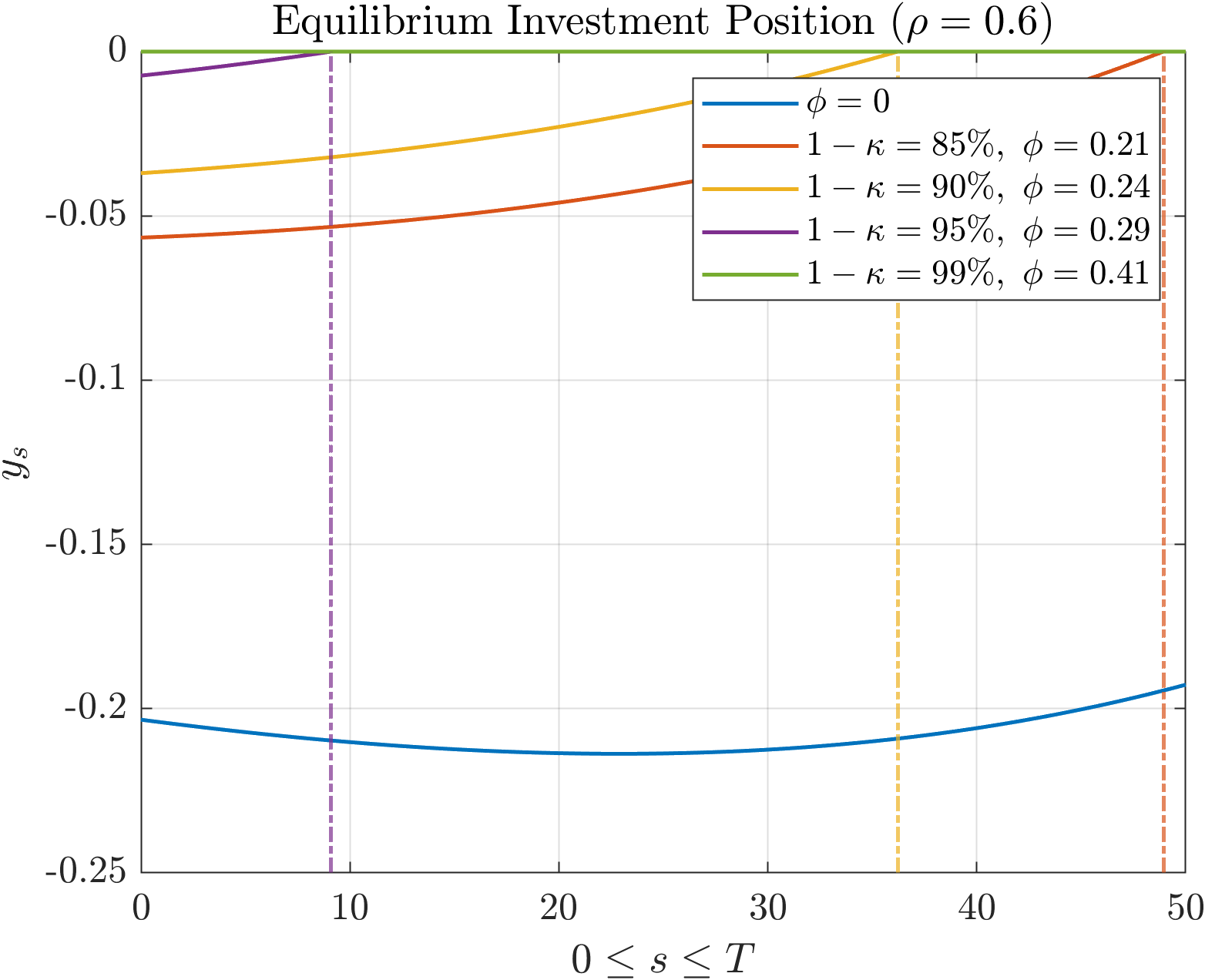}
    \includegraphics[width=0.42\linewidth]{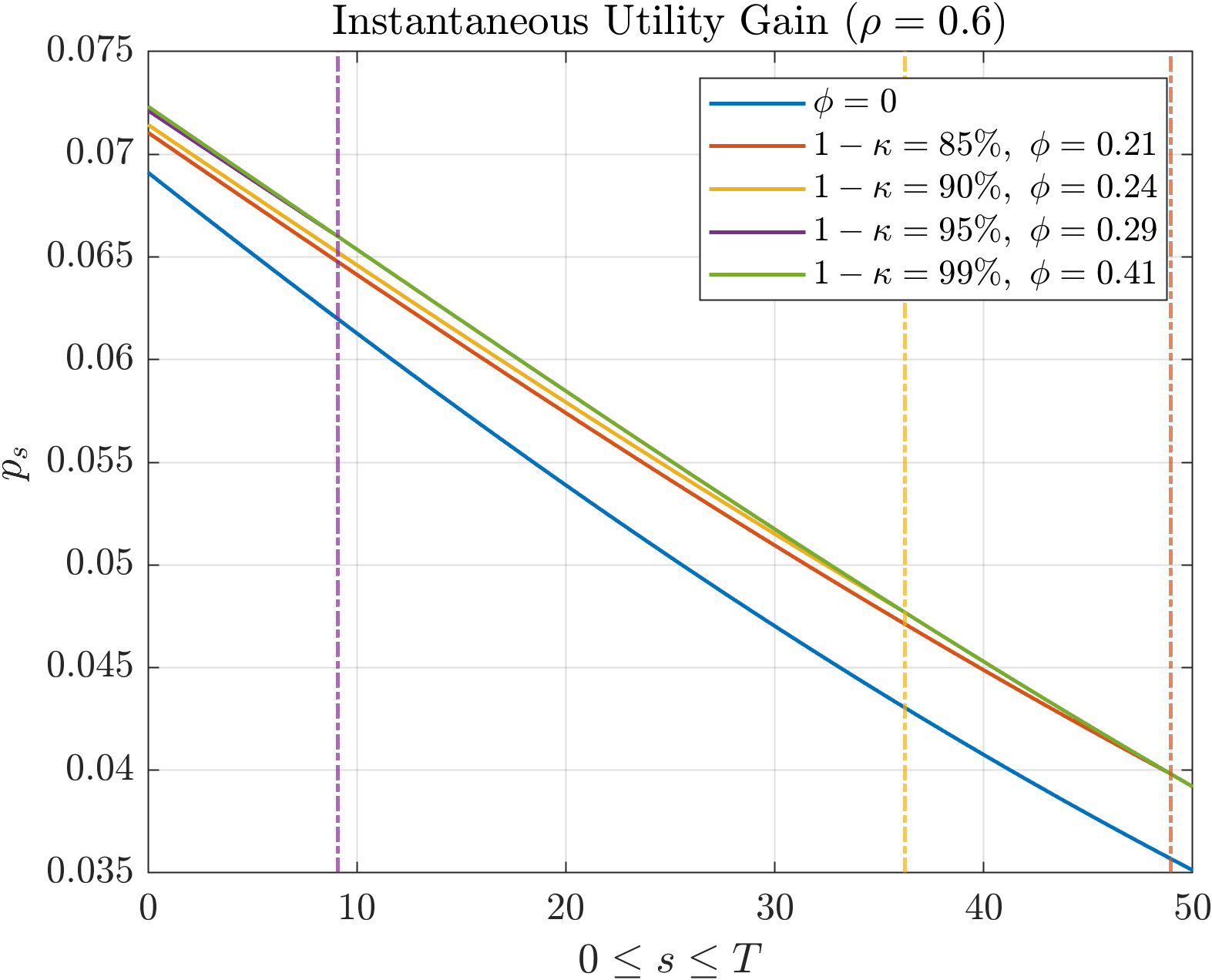}
    \caption{Equilibrium Outcomes under Positive Reference Correlation}
    \label{positive correlation}
\end{figure}

\subsection{Equilibrium Outcomes under Positive Reference Correlation}

Next, we consider the case of a positive reference correlation, which is a necessary condition for the emergence of the lower-bound correlation-distorted equilibrium. As shown in Figure \ref{positive correlation}, when correlation ambiguity is present, the equilibrium features two regimes: the lower-bound correlation-distorted regime and the pure underwriting regime, characterized by a negative and a zero investment position, respectively. For moderate levels of ambiguity aversion, the equilibrium switches from the lower-bound correlation-distorted regime to the pure underwriting regime after a certain time. 

Relative to the no-ambiguity benchmark, introducing correlation ambiguity leads to higher insurance prices and lower underwriting quantities. At the same time, the insurer’s short position in the risky financial asset moves closer to zero, indicating a reduction in exposure to financial market risk. This pattern reflects a more conservative portfolio choice in response to ambiguity about the dependence structure between underwriting and investment risks. 

A distinct and seemingly counterintuitive finding is that, under this parameter configuration, a moderate degree of ambiguity aversion increases the insurer’s instantaneous utility gain relative to the no-ambiguity case. Economically, this occurs because underwriting and financial risks are highly correlated under the reference model, rendering correlation-based diversification fragile. Under correlation ambiguity, the worst-case distortion is endogenously tilted toward the lower bound, which weakens the effective dependence between the two risks. Anticipating this adverse distortion, the insurer reduces its exposure to correlation-sensitive risk sharing and adopts a more robust portfolio allocation, thereby improving instantaneous utility gains. 

\begin{figure}[htbp]
    \centering
    \includegraphics[width=0.42\linewidth]{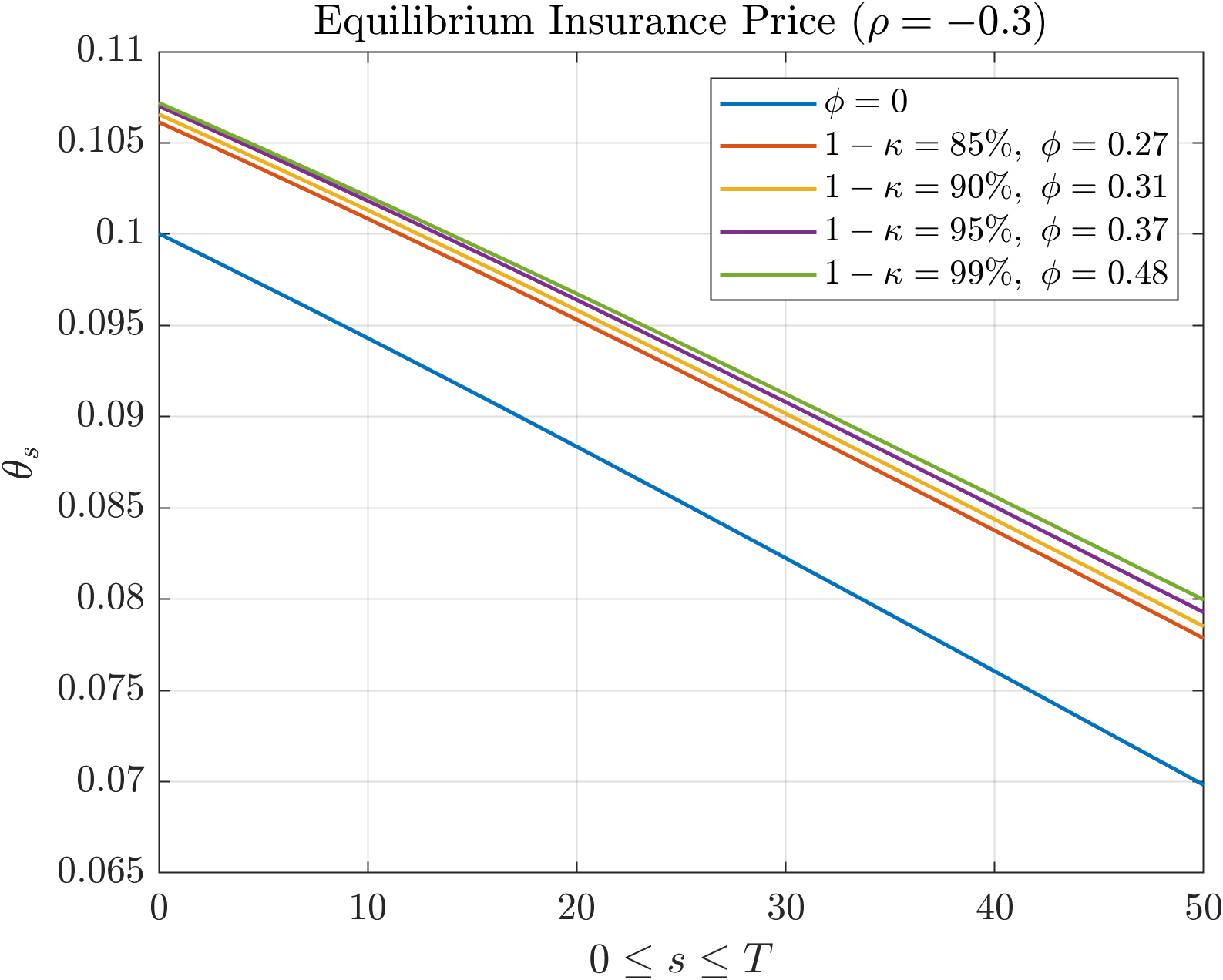}
    \includegraphics[width=0.42\linewidth]{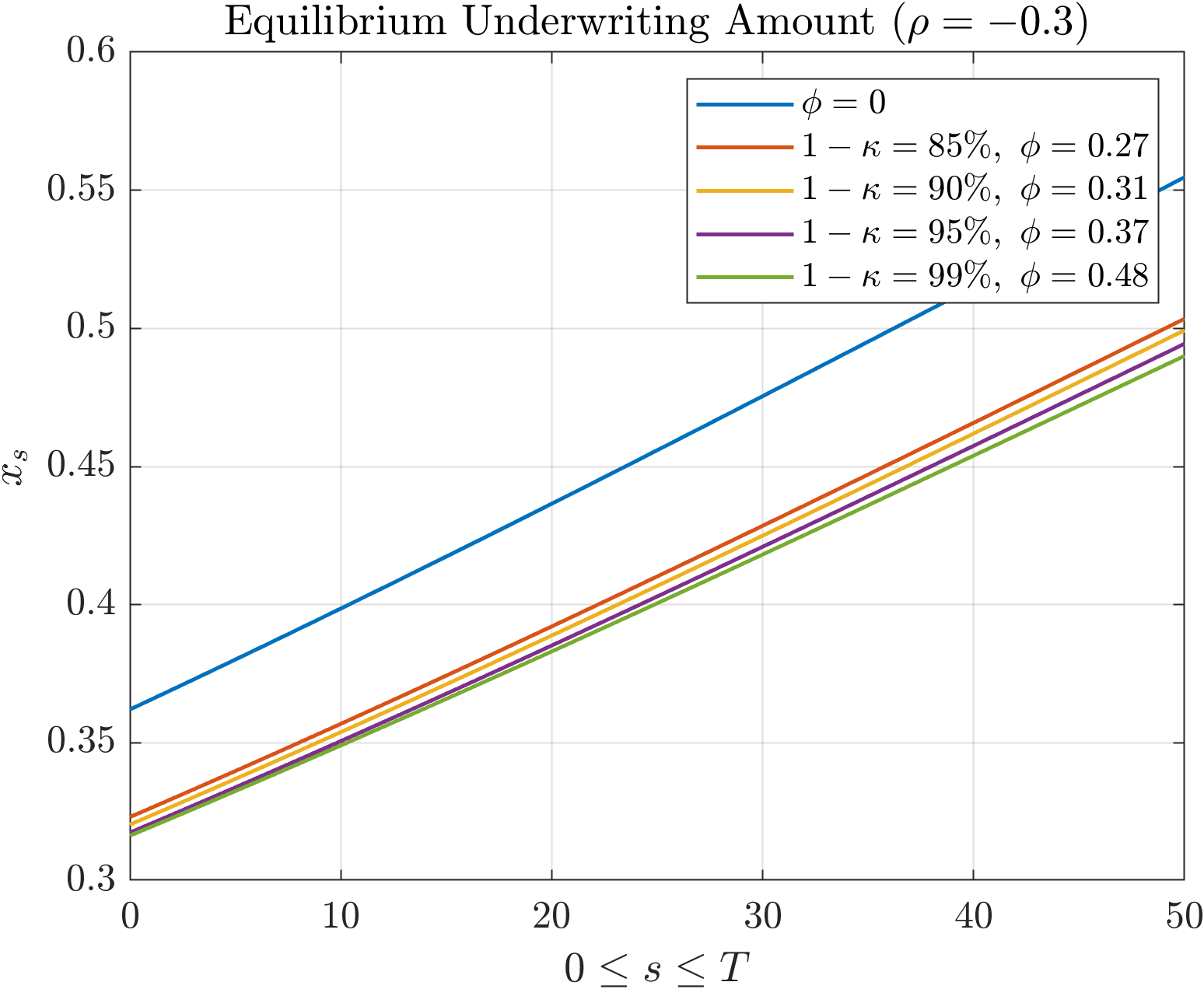}
    \includegraphics[width=0.42\linewidth]{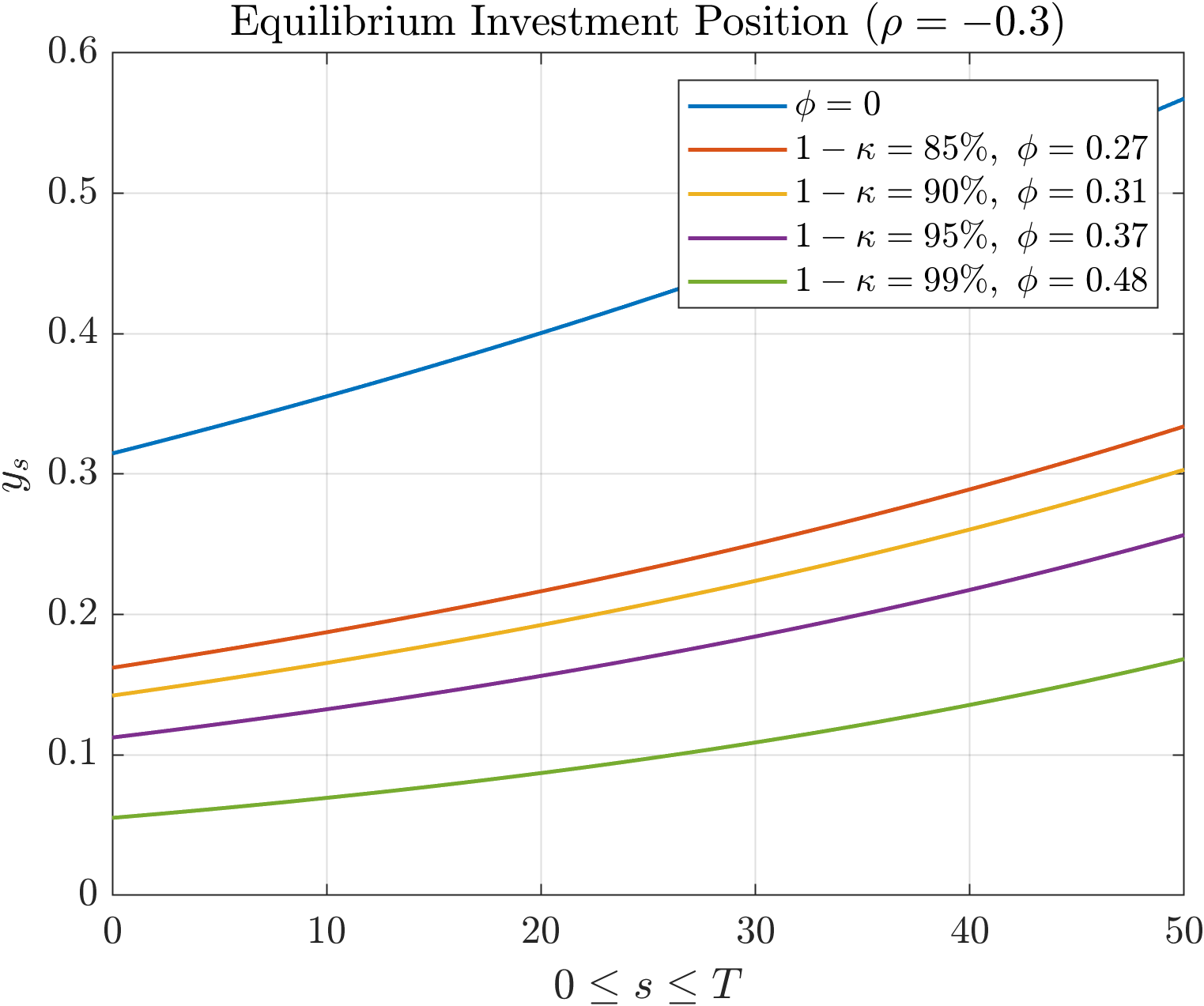}
    \includegraphics[width=0.42\linewidth]{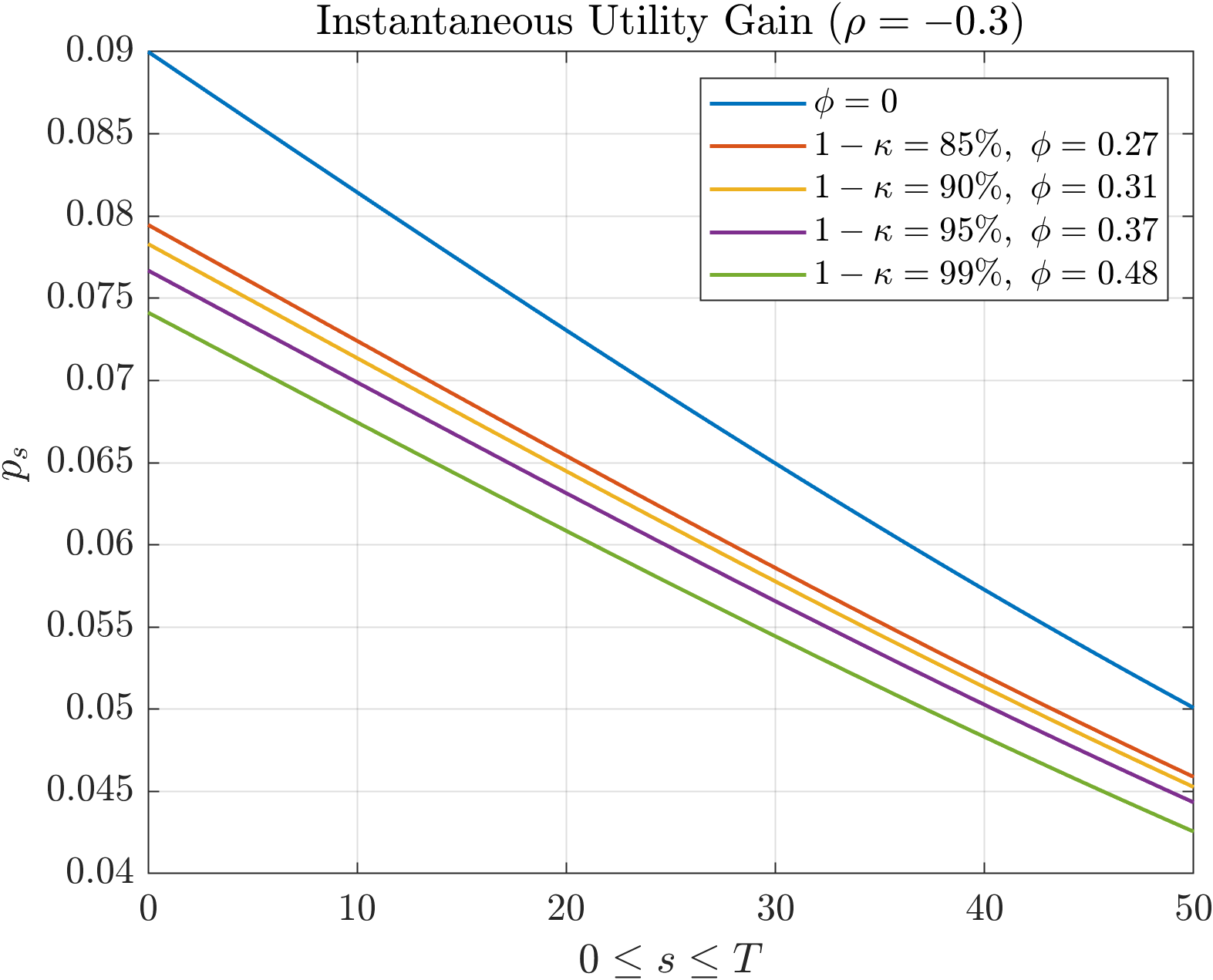}
    \caption{Equilibrium Outcomes under Negative Reference Correlation}
    \label{negative correlation}
\end{figure}

\subsection{Equilibrium Outcomes under Negative Reference Correlation}

Finally, we consider the case of a negative reference correlation. As shown in Figure \ref{negative correlation}, when $\rho = -0.3$, the equilibrium with correlation ambiguity falls entirely into the upper-bound correlation-distorted regime. Under the worst-case correlation distortion, the effective dependence between underwriting and financial risks becomes less negative and may even turn positive. This shift weakens the natural hedging effect implied by negative correlation, leading to higher insurance prices and simultaneously lower underwriting activity and financial investment. As a result, the insurer’s instantaneous utility gain decreases relative to the no-ambiguity benchmark.

Across the three benchmark cases considered above, the zero underwriting equilibrium does not arise. This is because such a regime requires sufficiently attractive financial investment opportunities, which in turn depend on a sufficiently large Sharpe ratio. When the Sharpe ratio is relatively low, insurers continue to engage in underwriting even in the presence of correlation ambiguity. By increasing the Sharpe ratio, however, it is straightforward to construct parameter configurations under which the zero underwriting equilibrium emerges. 

\section{Conclusion} \label{Section Conclusion}

This paper studies dynamic insurance pricing when insurers face ambiguity about the correlation between insurance and financial risks. Motivated by the increasing role of insurers as financial intermediaries and the empirical difficulty of estimating dependence structures, we incorporate correlation ambiguity into an equilibrium pricing framework and characterize insurers’ optimal underwriting and investment decisions under worst-case beliefs. 

We show that correlation ambiguity generates multiple equilibrium regimes with distinct pricing, underwriting, and investment behaviors. Depending on the reference correlation, the degree of ambiguity aversion, and the relative profitability of underwriting and investment, the equilibrium may feature pure underwriting, zero underwriting, or correlation-distorted regimes, and regime switching can occur over time. Importantly, correlation ambiguity affects insurance pricing in a fundamentally different way from ambiguity regarding loss expectations. Greater ambiguity aversion does not necessarily raise insurance prices or reduce insurer utility, and under certain conditions it can even improve utility by inducing more robust portfolio choices. Overall, we highlight the importance of dependence uncertainty for insurance pricing and risk management. 

\section*{Acknowledgments} 

We thank the members of the China Center for Insurance and Risk Management at School of Economics and Management, Tsinghua University, the Mathematical Finance and Actuarial Sciences Group at the Department of Mathematical Sciences, Tsinghua University, and the Insurance Research Group at the Faculty of Economics and Business, KU Leuven, for their valuable feedback and insightful discussions. 

\section*{Declaration of Generative AI in Writing}

During the preparation of this work the authors used ChatGPT in order to improve the readability and language of the manuscript. After using this tool/service, the authors reviewed and edited the content as needed and take full responsibility for the content of the published article. 

\bibliographystyle{apalike}
\bibliography{sample}

\end{document}